\author{Marcelo Brutti Righi\footnote{We are grateful for the financial support of CNPq (Brazilian Research Council) projects number 302369/2018-0 and 407556/2018-4.}
	\\
	\textit{Federal University of Rio Grande do Sul}\\
	marcelo.righi@ufrgs.br}
\title{Star-shaped acceptability indexes}
\date{}
\newtheorem{Def}{Definition}[section]
\newtheorem{Thm}[Def]{Theorem}
\newtheorem{Prp}[Def]{Proposition}
\newtheorem{Crl}[Def]{Corollary}
\theoremstyle{definition}
\newtheorem{Exm}[Def]{Example}
\theoremstyle{remark}
\newtheorem{Rmk}[Def]{Remark}
\numberwithin{equation}{section}
\begin{document}
	\maketitle
	\begin{abstract}
	We propose the star-shaped acceptability indexes as generalizations of both the approaches of \cite{Cherny2009} and \cite{Gianin2013} in the same vein as star-shaped risk measures generalize both the classes of coherent and convex risk measures in \cite{Castagnoli2021}. We characterize acceptability indexes through star-shaped risk measures and star-shaped acceptance sets as the minimum of a family of quasi-concave acceptability indexes. Further, we introduce concrete examples under our approach linked to Value at Risk, risk-adjusted reward on capital, reward-based gain-loss ratio, and monotone reward-deviation ratio.
\end{abstract}	
\smallskip
\noindent \textbf{Keywords}: Star-shaped acceptability indexes, star-shaped risk measures, performance measures, risk adjusted return on capital, gain-loss ratio.

\section{Introduction}

The task of assessing the performance of financial investments is central. Since the beginning of modern financial theory, indexes such as the Sharpe ratio have been considered to assess the trade-off between risk and return. In the last decade, performance has been analyzed through acceptability indexes since the seminal paper of \cite{Cherny2009}. In that paper, the authors present an axiomatic theory with properties that a performance measure must fulfill. That is, better results are desired, diversification is beneficial, and the scale does not affect the decision. Further, they explore the connection to coherent risk measures and their acceptance sets in the sense of \cite{Artzner1999} and \cite{Delbaen2002}. The main feature is that a performance measure possesses the desired properties if and only if it can be represented by a family of acceptability indexes linked to coherent risk measures, representing some level of acceptability.

This axiomatic framework contemplates a wide number of specific examples that are used both in literature and industry. Many authors study specific measures. See \cite{Chen2011}, \cite{Schuhmacher2014} and \cite{Zhitlukhin2021} for the relation with  Sharpe ratio,  \cite{Biagini2013},  \cite{Zakamouline2014} and \cite{Voelzke2015} regarding Gain-loss ratio (GLR), \cite{Zakamouline2010} for the risk-adjusted return on capital (RAROC), and \cite{Mondal2020} for the so-called upside beta. Furthermore, extensions of the initial approach in \cite{Cherny2009} are conducted:  \cite{Bielecki2013}, \cite{Bielecki2014} and \cite{Bielecki2016} to a conditional dynamic framework; \cite{Kountzakis2020} for stochastic processes, and \cite{Zeng2019} for a multivariate context. Moreover, other advances are discussed, such as qualitative robustness in \cite{Rossello2015} and optimization in \cite{Kovacova2020}.

One of the properties demanded in these papers is scale invariance, which assures that acceptability does not change with position size. However, the size of a financial position can affect its performance.
This topic is relevant in financial market descriptions and their intrinsic dynamics. For risk measures theory, positive homogeneity plays this role. \cite{Follmer2002}, and \cite{Frittelli2002} argue against this property and the consequent sub-additivity assumptions adopted in the framework of coherent risk measures by introducing the class of convex risk measures. Further discussion on the relationship between risk measures and liquidity risk is done in \cite{Acerbi2008}, and \cite{Lacker2018}.

In order to incorporate this feature in the theory of performance measures and acceptability indexes, \cite{Gianin2013} relax the scale invariance property. Furthermore, the authors represent their framework's acceptability indexes under convex risk measures. This kind of convex acceptability family is also considered in \cite{Drapeau2013} and \cite{Frittelli2014}, where the key property for the functional is quasi-concavity. Such framework is extended to the conditional dynamic context in \cite{Biagini2014}.

Despite the role of liquidity, there is also debate on the benefits of diversification. More specifically, convexity, which is equivalent to sub-additivity under positive homogeneity, of risk measures can be misleading. See \cite{Dhaene2008} for instance. This issue could be present when risk is allocated competitively, when a non-concave utility function is involved in risk assessment, or when capital requirements are aggregated in robust ways that do not preserve convexity. In this sense, \cite{Castagnoli2021} propose the class of star-shaped risk measures, which can be understood as a generalization for positive homogeneity and convexity. The nomenclature comes from the star-shaped property of the generated acceptance set. This class allows for the most used non-convex risk measure, the Value at Risk (VaR), to be in the same class as the convex risk measures. \cite{Liebrich2021} explores allocations of star-shaped risk measures. In contrast, \cite{Moresco2021} relate them to the broader class of monetary risk measures, \cite{Herdegen2021} applies it to sensitivity to large losses and regulatory arbitrage, \cite{Laeven2023} relates to law invariance, \cite{Tian2023} explores a dynamic framework, while \cite{Nie2024} considers a set-valued setup.

Based on this discussion, there would be a gain in generality by considering indexes that are not restricted to the coherent or convex cases in order to be able to avoid both scale invariance, which is linked to liquidity issues, and quasi-concavity, which is connected to the role for diversification. The reasoning for star-shapedness as a sensible property
is that any scaled reduction is possible if a position is acceptable at some level. More specifically, the meaning of this property is evident: reducing the exposure to an acceptable position at some level cannot make it unacceptable at this same level. For instance, the change of quasi-concavity by star-shapedness is
motivated by the lack of empirical support for the general feasibility of mergers. Thus, star-shapedness penalizes
the concentration of risk and the ensuing liquidity problems. However, unlike quasi-concavity, star-shapedness does not take a position on the effects of merge-and-downsize strategies.

Under this context, our first objective and contribution is to extend the main representation results of acceptability indexes for this framework of star-shapedness, which is based on the maximum of a family of quasi-concave acceptability indexes and, equivalently, by an increasing family of star-shaped risk measures. Our second contribution is to propose concrete non-trivial classes of star-shaped acceptability indexes, studying their representation in the light of our main results. Such representations are crucial for practical implementation and are non-trivial from a technical standpoint. In practical matters, an agent's acceptability is based on some family of utilities or risk measures, and a regulator or decision maker considers some family of individual acceptability indexes. Both situations are covered in our paper. There is no paper dealing with this setup.

Section \ref{sec:app} exposes notations, definitions, and background material. In Section \ref{sec:results},  we propose the star-shaped acceptability indexes as generalizations of both the approaches of \cite{Cherny2009} and \cite{Gianin2013} in the same vein as star-shaped risk measures generalize both the classes of coherent and convex risk measures in \cite{Castagnoli2021}. We expose the main theoretical results, with an emphasis on Theorem \ref{Thm:main}, which characterize acceptability indexes through star-shaped risk measures, star-shaped acceptance sets, and as the minimum of some family of quasi-concave acceptability indexes. These results are the building block for the subsequent sections and examples in the paper.

In Section
\ref{sec:var} we use the fact that a direct way to produce acceptability indexes is to consider an increasing family of star-shaped risk measures/star-shaped acceptance sets to study the most used risk measure in literature, the VaR, which is not contemplated in the coherent and convex cases. From VaR, we also consider distortions in the Choquet integral format. Expected Shortfall (ES) is a prominent example, but there are also ones with non-concave distortions.
This kind of distortion-based acceptability
indexes are mainly used for pricing and hedging purposes, especially in the literature on conic finance. See the book of \cite{Madan2016} for a review. However, they are less used as tools for performance, mostly due to their lack of interpretation. Nonetheless, in our context, the intuition is for a risk-averse family expectation, which are risk measures that give more weight to losses and increase risk aversion. Furthermore, we use this setup to characterize law invariant or second-order stochastic dominance, preserving star-shaped acceptability indexes. This issue is connected to consistent risk measures as proposed by \cite{Mao2020}. 

In Section \ref{sec:ratio}, we explore acceptability indexes based on ratios of reward and risk measures. In subsection \ref{sec:raroc}, we focus on the widespread performance measure risk-adjusted return in capital (RAROC), which is a ratio between the return, measured as expectation, and a risk measure. We make two extensions/generalizations for this function. First, we allow the risk measure in the denominator to be star-shaped. Second, we extend the numerator from expectation to some reward measure, defined as the negative of some risk measure. In subsection \ref{sec:GLR} we explore another reward/risk criterion: a ratio acting as a performance measure. The gain-loss ratio (GLR), a ratio between the expectation of both positive and negative parts, that \cite{Bernardo2000} propose as an alternative to the Sharpe ratio, which can lead to good deals. We consider a related acceptability formulation for the GLR that is not quasi-concave but a star-shaped acceptability index. In subsection \ref{sec:Dev}, we consider another possibility for a ratio that is the one between reward and deviation measures, in the sense of \cite{Rockafellar2006}, \cite{Rockafellar2013} and \cite{Righi2018a}. Such a structure lacks monotonicity. \cite{Zhitlukhin2021} considers a monotone version of the Sharpe ratio, which we then extend this approach into our framework of star-shaped acceptability indexes by allowing a star-shaped reward in the numerator and a deviation in the denominator.

In section \ref{sec:ill}, we evaluate the acceptability indexes described above for a simple position in a spot market portfolio. The sample considered is composed of stocks from the S\&P100 index. Such illustration allows us to explore the
numerical magnitudes and the types of values one may expect in both absolute and relative matters. This illustration is also useful for empirical problems such as pricing under the physical measure since if an agent has access to historical data, he/she may compute the price that attains a particular level of
acceptability. We consider examples of star-shaped acceptability indexes mainly based on quantiles and their most usual quasi-concave or coherent counterparts. Results make it possible to verify that they present similar average values, indicating that they can be used for performance evaluation, producing the same evaluations but with more theoretical generality. Nonetheless, regarding the changes in distinct sub-samples, there are larger discrepancies for the quasi-concave/coherent performance measures concerning the star-shaped ones, reflecting some robustness for the latter class.

\section{{Background material}}\label{sec:app}

We consider a probability space $(\Omega,\mathcal{F},\mathbb{P})$. All equalities and inequalities are in the $\mathbb{P}-a.s.$ sense. Let $L^0=L^0(\Omega,\mathcal{F},\mathbb{P})$ and $L^{\infty}=L^{\infty}(\Omega,\mathcal{F},\mathbb{P})$ be the spaces of (equivalence classes under $\mathbb{P}-a.s.$ equality of) finite and essentially bounded random variables, respectively. When not explicit, we consider in  $L^\infty$ its norm topology. We define $1_A$ as the indicator function for an event $A\in\mathcal{F}$. We identify constant random variables with real numbers. 
We denote by $X_n\rightarrow X$ convergence in the $L^\infty$ essential supremum norm $\lVert \cdot\rVert_{\infty}$, whereas $\lim\limits_{n\rightarrow\infty}X_n=X$ indicates $\mathbb{P}-a.s.$ convergence. The notation $X\succeq Y$, for $X,Y\in L^\infty$, indicates second-order stochastic dominance, that is, $E[f(X)]\leq E[f(Y)]$ for any increasing convex function $f\colon\mathbb{R}\rightarrow\mathbb{R}$. 

Let $\mathcal{P}$ be the set of all probability measures on $(\Omega,\mathcal{F})$. We denote by $E_{\mathbb{Q}}[X]=\int_{\Omega}Xd\mathbb{Q}$, $F_{X, \mathbb{Q}}(x)=\mathbb{Q}(X\leq x)$, and $F_{X, \mathbb{Q}}^{-1}(\alpha)=\inf\left\lbrace x:F_{X, \mathbb{Q}}(x)\geq\alpha\right\rbrace $, the expected value, the (increasing and right-continuous) probability function, and its left quantile for $X\in L^\infty$ concerning $\mathbb{Q}\in\mathcal{P}$. We write $X\overset{\mathbb{Q}}\sim Y$ when $F_{X,\mathbb{Q}}=F_{Y,\mathbb{Q}}$. We drop subscripts indicating probability measures when $\mathbb{Q}=\mathbb{P}$. Furthermore, let $\mathcal{Q}\subset\mathcal{P}$ be the set of probability measures $\mathbb{Q}$ that are absolutely continuous with respect to $\mathbb{P}$, with Radon--Nikodym derivative $\frac{d\mathbb{Q}}{d\mathbb{P}}$. 

We begin by defining the main object of this paper, the acceptability indexes and their properties. These properties are detailed in \cite{Cherny2009}, and \cite{Gianin2013}.

\begin{Def}\label{def:accept}
	A functional $\alpha:L^\infty\rightarrow[0,\infty]$ is an acceptability index. It may have the following properties: 
	
	\begin{enumerate}
		\item Monotonicity: if $X \geq Y$, then $\alpha(X) \geq \alpha(Y),\:\forall\: X,Y\in L^\infty$.
		\item Quasi-concavity: $\alpha(\lambda X+(1-\lambda)Y)\geq \min\{\alpha(X),\alpha(Y)\},\:\forall\: X,Y\in L^\infty,\:\forall\:\lambda\in[0,1]$.
		\item Scale invariance: $\alpha(\lambda X)= \alpha(X),\:\forall\: X\in L^\infty,\:\forall\:\lambda > 0$.
		\item Star-shapedness: $\alpha(0)=\infty$ and $\alpha(\lambda X)\leq \alpha(X),\:\forall\: X\in L^\infty,\:\forall\:\lambda \geq 1$.
		\item Fatou continuity: if $\lim\limits_{n\rightarrow\infty}X_n=X$ and $\alpha(X_n)\geq x$, then $\alpha(X) \geq x$ for any $x\geq0$, $\forall\:\{X_n\}_{n=1}^\infty$ bounded in $L^\infty$ norm and for any $X\in L^\infty$.
		\item  Weak expectation consistency:  $\alpha(C)=0,\:\forall\:C\in \mathbb{R}$ with $C<0$.
		\item Expectation consistency: if $E[X]<0$, then $\alpha(X)=0$ and if $E[X]>0$, then $\alpha(X)>0,\:\forall\:X\in L^\infty$.
		\item Arbitrage consistency: if $X\geq0$ and $\mathbb{P}(X>0)>0$, then $\alpha(X)=\infty,\:\forall\: X\in L^\infty$.
		\item SSD consistency: if $X\succeq Y$, then  $\alpha(X) \geq \alpha(Y),\:\forall\: X,Y\in L^\infty$.
		\item Law invariance: if $F_X=F_Y$, then $\alpha(X)=\alpha(Y),\:\forall\:X,Y\in L^\infty$.	
	\end{enumerate}
	An acceptability index is called coherent if it satisfies (i), (ii), (iii), (v), and (vi); quasi-concave if it satisfies (i), (ii), (iv), (v) and (vi); star-shaped if it fulfills (i), (iv), (v) and (vi);  weak expectation, expectation, arbitrage, or SSD consistent if it satisfies, respectively (vi), (vii), (viii) and (ix); and law invariant if it satisfies (x).
\end{Def}

\begin{Rmk}
	In our star-shaped approach, and also on that of \cite{Gianin2013}, $0$ is acceptable at any level, i.e., $\alpha(0)=\infty$. Intuitively, this means that a null payoff, which is morally the same as doing nothing, is always acceptable. This acceptability is natural in the framework of monetary risk measures since, in that case, under normalization ($\rho(0)=0$), the null position is always acceptable. This property is important since it implies, under Monotonicity, that if $X\geq 0$, then $\alpha(X)\geq\alpha(0)=\infty$. In particular, Arbitrage consistency is satisfied. This definition is distinct from the approach in \cite{Cherny2009}, where most examples fulfill $\alpha(0)=0$, which under Monotonicity is stronger than Weak expectation consistency. In order to unify both approaches, it is necessary to make mild adjustments, for instance, to assume Weak expectation consistency, as is the case for our acceptance indexes, which causes no harm. Furthermore, Proposition \ref{prp:star} shows that under both quasi-concavity or scale invariance, presents in the approach of \cite{Cherny2009}, we have star shapedness, i.e., in such contexts, it is sufficient to have $\alpha(0)=\infty$.
\end{Rmk}

A key aspect of acceptability indexes is their representation under risk measures. In this sense, we now expose the definition of this concept. We refer to \cite{Delbaen2012} and \cite{Follmer2016} for more details regarding these properties.

\begin{Def}\label{def:risk}
	A risk measure is a functional $\rho\colon L^\infty\rightarrow\mathbb{R}$. It may have the following properties: 
	
	\begin{enumerate}
		\item (Anti-)Monotonicity: If $X \leq Y$, then $\rho(X) \geq \rho(Y),\:\forall\: X,Y\in L^\infty$.
		\item Translation invariance: $\rho(X+C)=\rho(X)-C,\:\forall\: X\in L^\infty,\:\forall\:C \in \mathbb{R}$.
		\item Convexity: $\rho(\lambda X+(1-\lambda)Y)\leq \lambda \rho(X)+(1-\lambda)\rho(Y),\:\forall\: X,Y\in L^\infty,\:\forall\:\lambda\in[0,1]$.
		\item Positive homogeneity: $\rho(\lambda X)=\lambda \rho(X),\:\forall\: X\in L^\infty,\:\forall\:\lambda \geq 0$.
		\item Star-shapedness: $\rho(\lambda X)\geq \lambda\rho(X),\:\forall\: X\in L^\infty,\:\forall\:\lambda \geq 1$.
		\item Law invariance: If $F_X=F_Y$, then $\rho(X)=\rho(Y),\:\forall\:X,Y\in L^\infty$.
		\item Fatou continuity:  If $\lim\limits_{n\rightarrow\infty}X_n=X$, then $\rho(X) \leq \liminf\limits_{n\rightarrow\infty} \rho( X_{n})$, $\forall\:\{X_n\}_{n=1}^\infty$ bounded in $L^\infty$ norm and for any $X\in L^\infty$.
	\end{enumerate}
	A risk measure is called monetary if it satisfies (i) and (ii);  convex if it is monetary and satisfies (iii); coherent if it is convex and satisfies (iv); star-shaped if it is monetary and satisfies (v);  law invariant if it satisfies (vi), and Fatou continuous if it attends (vi). Unless
	otherwise stated, we assume that risk measures are normalized in the sense that $\rho(0) = 0$. Its
	acceptance set is defined as $\mathcal{A}_\rho = \{X \in L^\infty\colon \rho(X) \leq 0\}$.
\end{Def}

A fundamental concept for both risk measures and acceptability indexes is the acceptance set. We now define this concept and expose some properties. Propositions 2.1 and 2.2 of \citet{Artzner1999}, Propositions 4.6 and 4.7 of \citet{Follmer2016}, and Proposition 2 of \cite{Castagnoli2021} expose a direct interplay between the properties of risk measures and acceptance sets.

\begin{Def}
	An acceptance set is a subset $\mathcal{A} \subseteq L^\infty$. It may fulfill the following properties:
	\begin{enumerate}
		\item Monotonicity: if $X \in \mathcal{A}$ and $X\leq Y$, then  $Y \in  \mathcal{A},\:\forall\:X,Y\in L^\infty$.
		\item Monetarity: $\inf \{ m \in \mathbb{R}\colon m \in \mathcal{A} \}=0$.
		\item [(iii)] Convexity:  if $X$ and $Y \in  \mathcal{A}$, then  $\lambda X +(1-\lambda) Y \in  \mathcal{A},\:\forall\:X,Y\in L^\infty,\:\forall\:\lambda\in[0,1]$. 
		\item [(iv)] Conicity: if $X \in  \mathcal{A}$, then implies $\lambda X \in  \mathcal{A},\:\forall\:X\in L^\infty,\:\forall\:\lambda\geq0$.
		\item [(v)] Star-shapedness: if $X \in  \mathcal{A}$, then $\lambda X \in  \mathcal{A},\:\forall\:X\in L^\infty,\:\forall\:\lambda\in[0,1]$.
	\end{enumerate}
	An acceptance set is called monetary if it satisfies (i) and (ii);  convex if it is monetary and satisfies (iii); coherent if it is convex and satisfies (iv); star-shaped if it is monetary and satisfies (v). Its induced risk measure is given by $\rho_\mathcal{A}(X) = \inf \{ m \in \mathbb{R}\colon X + m \in \mathcal{A}\},\:\forall\:X\in L^\infty$.
\end{Def}

We have the following main representations in the literature for coherent and quasi-concave acceptability indexes.

\begin{Thm}[Theorem 1 in \cite{Cherny2009}, Proposition 3 in \cite{Gianin2013}]\label{Thm:back}
	A functional $\alpha\colon L^\infty\to[0,\infty]$  is a quasi-concave (resp. coherent) acceptability index if and only if there is an increasing family $\{\rho_{x}\}_{\{x>0\}}$ of Fatou continuous convex (resp. coherent) risk measures such that the representation holds
	\begin{equation}\label{eq:dualbc}
		\alpha(X)=\sup\left\lbrace  x>0\colon\rho_{x}(X)\leq 0\right\rbrace=\sup\left\lbrace  x>0\colon X\in\mathcal{A}_{\rho_x}\right\rbrace,\:\forall\:X\in L^\infty,
	\end{equation}
	with  the convention $\sup\emptyset=0$.  
\end{Thm}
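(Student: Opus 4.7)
My plan is to prove the equivalence in two directions, with the harder direction being necessity. Throughout I will work with the quasi-concave case, since the coherent case only adds scale invariance on one side and positive homogeneity of each $\rho_{x}$ on the other, and the bridge is immediate: $\rho_{x}(\lambda X)=\lambda\rho_{x}(X)$ implies $\rho_{x}(\lambda X)\leq0\iff\rho_{x}(X)\leq0$ for $\lambda>0$.

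For sufficiency, assume an increasing family $\{\rho_{x}\}_{x>0}$ of Fatou continuous convex normalized risk measures is given, and set $\alpha(X)=\sup\{x>0:\rho_{x}(X)\leq0\}$. Monotonicity and quasi-concavity follow since the acceptance sets $\mathcal{A}_{\rho_{x}}$ are monotone and convex, and the family of such sets is decreasing in $x$ (from the family of $\rho_{x}$'s being increasing). For star-shapedness, $\rho_{x}(0)=0$ yields $\alpha(0)=\infty$, while for $\lambda\geq1$ convexity combined with $0\in\mathcal{A}_{\rho_{x}}$ gives $\mathcal{A}_{\rho_{x}}$ star-shaped around $0$, so $\rho_{x}(\lambda X)\leq 0$ implies $\rho_{x}(X)=\rho_{x}(\tfrac{1}{\lambda}\lambda X)\leq 0$, hence $\alpha(\lambda X)\leq\alpha(X)$. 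Weak expectation consistency for constants $C<0$ follows from translation invariance: $\rho_{x}(C)=-C>0$ for every $x$, so $\alpha(C)=0$. For Fatou continuity, assume $\lim_{n}X_{n}=X$ with $\{X_{n}\}$ bounded and $\alpha(X_{n})\geq x$. For any $y<x$ the inclusion $\mathcal{A}_{\rho_{x}}\subseteq\mathcal{A}_{\rho_{y}}$ yields $\rho_{y}(X_{n})\leq0$, so Fatou continuity of $\rho_{y}$ gives $\rho_{y}(X)\leq\liminf_{n}\rho_{y}(X_{n})\leq0$, hence $\alpha(X)\geq y$; letting $y\uparrow x$ concludes.

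For necessity, given a quasi-concave acceptability index $\alpha$, define $\mathcal{A}_{x}=\{X\in L^{\infty}:\alpha(X)\geq x\}$ for $x>0$, and $\rho_{x}:=\rho_{\mathcal{A}_{x}}$, i.e.\ $\rho_{x}(X)=\inf\{m\in\mathbb{R}:\alpha(X+m)\geq x\}$. Monotonicity and quasi-concavity of $\alpha$ imply $\mathcal{A}_{x}$ is monotone and convex; the bound $\alpha(m)=\infty$ for $m\geq0$ (from $\alpha(0)=\infty$ and monotonicity) together with $\alpha(m)=0$ for $m<0$ (weak expectation consistency) gives monetarity $\inf\{m:m\in\mathcal{A}_{x}\}=0$. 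Thus each $\rho_{x}$ is a convex monetary normalized risk measure by the standard correspondence cited before the theorem, and the family is increasing in $x$ since $\mathcal{A}_{x}$ is decreasing. To obtain Fatou continuity of $\rho_{x}$, fix $X_{n}\to X$ bounded and set $c=\liminf_{n}\rho_{x}(X_{n})$; passing to a subsequence with $\rho_{x}(X_{n_{k}})\leq c+\varepsilon$ and using monotonicity of $\mathcal{A}_{x}$ yields $X_{n_{k}}+c+2\varepsilon\in\mathcal{A}_{x}$, i.e.\ $\alpha(X_{n_{k}}+c+2\varepsilon)\geq x$; Fatou continuity of $\alpha$ then gives $\alpha(X+c+2\varepsilon)\geq x$, so $\rho_{x}(X)\leq c+2\varepsilon$, and $\varepsilon\downarrow0$ closes the argument.

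It remains to verify the representation. For the inequality $\alpha(X)\leq\sup\{x>0:\rho_{x}(X)\leq0\}$, note that for any $x<\alpha(X)$ one has $X\in\mathcal{A}_{x}$, hence $\rho_{x}(X)\leq0$. For the reverse inequality, if $\rho_{x}(X)\leq0$ then for every $\varepsilon>0$ we have $X+\varepsilon\in\mathcal{A}_{x}$, so $\alpha(X+\varepsilon)\geq x$; since $X+\varepsilon\to X$ in $L^{\infty}$ as $\varepsilon\downarrow0$, Fatou continuity of $\alpha$ yields $\alpha(X)\geq x$. The coherent case is handled by observing that scale invariance of $\alpha$ makes each $\mathcal{A}_{x}$ a cone, and therefore each $\rho_{x}$ positively homogeneous and coherent. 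The most delicate step is the transfer of Fatou continuity from $\alpha$ to $\rho_{x}$, because Fatou continuity of $\alpha$ is stated at fixed acceptance levels and must be leveraged through the translation $X_{n}\mapsto X_{n}+c+2\varepsilon$; the little buffer $2\varepsilon$ (rather than $\varepsilon$) is what makes the monotonicity step work without requiring exact attainment of the infimum defining $\rho_{x}$.
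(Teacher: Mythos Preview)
The paper does not give its own proof of this theorem; it is stated as a background result from \cite{Cherny2009} and \cite{Gianin2013}. Your proof is correct and follows exactly the standard route, which is also the route the paper takes when proving the star-shaped generalization in Theorem~\ref{Thm:main}: define $\mathcal{A}_x=\{X:\alpha(X)\geq x\}$, set $\rho_x=\rho_{\mathcal{A}_x}$, and transfer properties back and forth between $\alpha$, the acceptance sets, and the induced risk measures. So there is nothing substantively different to report; your argument is the expected one and matches the paper's own treatment of the more general case.
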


\section{Main Theory}\label{sec:results}

We begin by exposing preliminary results regarding basic properties fulfilled by star-shaped acceptability indexes. We first show alternative equivalent formulations for star-shapedness.

\begin{Prp}\label{Lmm:star}
	Let $\alpha\colon L^\infty\to[0,\infty]$ with $\alpha(0)=\infty$. The following are equivalent:
	\begin{enumerate}
		\item $\alpha(\lambda X)\leq\alpha(X),\:\forall\: X\in L^\infty,\:\forall\:\lambda\geq 1$.
		\item $\alpha(\lambda X)\geq\alpha(X),\:\forall\: X\in L^\infty,\:\forall\:\lambda\in[0,1]$.
		\item $\lambda\to\alpha(\lambda X)$ is decreasing $\forall\: X\in L^\infty$.
	\end{enumerate}
	
\end{Prp}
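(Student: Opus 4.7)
The plan is to establish the cycle by elementary rescaling arguments, with the hypothesis $\alpha(0)=\infty$ used only to handle the boundary case $\lambda=0$. The statement is essentially a one-variable fact about the function $g_X(\lambda):=\alpha(\lambda X)$ defined on $[0,\infty)$, and each of the three conditions is a reformulation of monotonicity of $g_X$.

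First I would prove (i)$\Leftrightarrow$(ii) by reparametrization. Assuming (i), fix $X\in L^\infty$ and $\lambda\in[0,1]$. If $\lambda=0$, then $\alpha(\lambda X)=\alpha(0)=\infty\geq\alpha(X)$. If $\lambda\in(0,1]$, set $Y=\lambda X$ and $\mu=1/\lambda\geq 1$; then $\mu Y=X$, so applying (i) to $Y$ with scalar $\mu$ gives $\alpha(X)=\alpha(\mu Y)\leq\alpha(Y)=\alpha(\lambda X)$, which is (ii). The reverse implication is symmetric: given (ii), $X\in L^\infty$ and $\lambda\geq 1$, set $Y=\lambda X$ and $\mu=1/\lambda\in(0,1]$, so $\mu Y=X$, and (ii) applied to $Y$ with $\mu$ yields $\alpha(X)=\alpha(\mu Y)\geq\alpha(Y)=\alpha(\lambda X)$.

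Next I would show (iii)$\Leftrightarrow$(i). The direction (iii)$\Rightarrow$(i) is immediate by taking $1\leq\lambda$ and comparing $g_X(1)$ with $g_X(\lambda)$. For (i)$\Rightarrow$(iii), fix $X\in L^\infty$ and $0\leq\lambda_1\leq\lambda_2$; I must show $\alpha(\lambda_1 X)\geq\alpha(\lambda_2 X)$. If $\lambda_1=0$, use $\alpha(0)=\infty$. If $\lambda_1>0$, set $Y=\lambda_1 X$ and $\mu=\lambda_2/\lambda_1\geq 1$, so $\mu Y=\lambda_2 X$, and (i) applied to $Y$ with $\mu$ gives $\alpha(\lambda_2 X)=\alpha(\mu Y)\leq\alpha(Y)=\alpha(\lambda_1 X)$.

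There is no real obstacle; the only subtle point is remembering that the domain of $\lambda$ in (iii) includes $0$, which is why the hypothesis $\alpha(0)=\infty$ is needed for the boundary case (and is the reason this formulation is consistent with Definition \ref{def:accept}(iv)). The three conditions collapse into the single statement that $\lambda\mapsto\alpha(\lambda X)$ is decreasing on $[0,\infty)$.
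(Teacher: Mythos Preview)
Your proof is correct and follows essentially the same elementary rescaling argument as the paper; the only cosmetic difference is that the paper proves the single cycle (i)$\Rightarrow$(ii)$\Rightarrow$(iii)$\Rightarrow$(i), whereas you prove the two equivalences (i)$\Leftrightarrow$(ii) and (i)$\Leftrightarrow$(iii).
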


\begin{proof}
	We have for any $X\in L^\infty$ the following chain of implications:
	
	(i)$\implies$ (ii). Let $\lambda\in[0,1]$. If $\lambda=0$ the result is trivial. If $\lambda>0$, then $\frac{1}{\lambda}\geq 1$. Hence, we have $\alpha(X)=\alpha\left(\lambda \frac{1}{\lambda}X\right) \leq \alpha\left(\lambda X\right) $.
	
	(ii)$\implies$(iii). Let $\lambda_1\geq\lambda_2\geq 0$. If $\lambda_1=0$ the result is trivial. If $\lambda_1>0$, then $0\leq\frac{\lambda_2}{\lambda1}\leq 1$. We then have that $\alpha(\lambda_2 X)=\alpha\left(\frac{\lambda_2}{\lambda_1}\lambda_1 X\right)\geq\alpha(\lambda_1 X)$.
	
	(iii)$\implies$(i). Let $\lambda\geq 1$. Then, we directly have that $\alpha(X)=\alpha(1X)\geq \alpha(\lambda X)$.
\end{proof}

We now prove an interplay of Star-shapedness to Scale invariance and Quasi-concavity. In particular, coherence implies quasi-concavity, which in turn implies star-shapedness. Further, under quasi-supperaditivity, i.e. $\alpha(X+Y)\geq\alpha(X)\wedge\alpha(Y)$ for any $X,Y\in L^\infty$, the three classes coincide. 

\begin{Prp}\label{prp:star}
	Let $\alpha\colon L^\infty\to[0,\infty]$ with $\alpha(0)=\infty$. Then, both Quasi-concavity and Scale invariance implies Star-shapedness. The converse is true if $\alpha$ is quasi-superadditive.
\end{Prp}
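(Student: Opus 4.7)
The plan is to split the statement into four implications: (a) scale invariance $\Rightarrow$ star-shapedness, (b) quasi-concavity $\Rightarrow$ star-shapedness, (c) star-shapedness plus quasi-superadditivity $\Rightarrow$ quasi-concavity, and (d) star-shapedness plus quasi-superadditivity $\Rightarrow$ scale invariance. Throughout I plan to freely invoke Proposition \ref{Lmm:star}, so that star-shapedness may be replaced by the equivalent monotonicity $\alpha(\mu X)\geq\alpha(X)$ for $\mu\in[0,1]$.

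For (a), scale invariance gives $\alpha(\lambda X)=\alpha(X)$ for every $\lambda>0$, so in particular $\alpha(\lambda X)\leq\alpha(X)$ for $\lambda\geq 1$; combined with $\alpha(0)=\infty$ this is exactly star-shapedness. For (b), fix $X\in L^\infty$ and $\lambda\geq 1$. The plan is to write $X$ as a convex combination of $\lambda X$ and $0$ via $X=\tfrac{1}{\lambda}(\lambda X)+(1-\tfrac{1}{\lambda})\cdot 0$ and then apply quasi-concavity together with $\alpha(0)=\infty$ to conclude $\alpha(X)\geq\min\{\alpha(\lambda X),\alpha(0)\}=\alpha(\lambda X)$. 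These two implications are routine.

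For (c), given $X,Y\in L^\infty$ and $\lambda\in[0,1]$, I would chain quasi-superadditivity with the $\mu\in[0,1]$ form of star-shapedness from Proposition \ref{Lmm:star}(ii):
\begin{equation*}
\alpha(\lambda X+(1-\lambda)Y)\geq\min\{\alpha(\lambda X),\alpha((1-\lambda)Y)\}\geq\min\{\alpha(X),\alpha(Y)\}.
\end{equation*}
The boundary cases $\lambda\in\{0,1\}$ are handled directly by $\alpha(0)=\infty$.

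The subtle part, and the main obstacle, is (d): star-shapedness by itself only produces one-sided comparisons, so the reverse inequality $\alpha(\lambda X)\geq\alpha(X)$ for $\lambda\geq 1$ has to come from quasi-superadditivity. The trick I plan to use is to first establish $\alpha(\mu Z)=\alpha(Z)$ for $\mu\in(0,1)$ by squeezing: apply quasi-superadditivity to the decomposition $Z=\mu Z+(1-\mu)Z$ to get
\begin{equation*}
\alpha(Z)\geq\min\{\alpha(\mu Z),\alpha((1-\mu)Z)\},
\end{equation*}
while Proposition \ref{Lmm:star}(ii) yields $\alpha(\mu Z)\geq\alpha(Z)$ and $\alpha((1-\mu)Z)\geq\alpha(Z)$; forcing equalities throughout. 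To upgrade to $\lambda>1$, set $\mu=1/\lambda\in(0,1)$ and $Z=\lambda X$, so the equality just proved reads $\alpha(X)=\alpha(\mu Z)=\alpha(Z)=\alpha(\lambda X)$. The case $\lambda=1$ is immediate, completing scale invariance.
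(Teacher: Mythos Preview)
Parts (a), (b), and (c) are correct and match the paper's argument essentially word for word. The gap is in (d). From the three inequalities you assemble you only obtain
\[
\min\{\alpha(\mu Z),\alpha((1-\mu)Z)\}=\alpha(Z);
\]
this forces \emph{at least one} of $\alpha(\mu Z)$, $\alpha((1-\mu)Z)$ to equal $\alpha(Z)$, not both. Nothing in your squeeze rules out, say, $\alpha(\mu Z)>\alpha(Z)=\alpha((1-\mu)Z)$ when $\mu<\tfrac12$ (which is perfectly consistent with $\lambda\mapsto\alpha(\lambda Z)$ being decreasing). So the blanket conclusion ``$\alpha(\mu Z)=\alpha(Z)$ for all $\mu\in(0,1)$'' does not follow, and the subsequent substitution $\mu=1/\lambda$ is not yet justified.

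The repair is short. Your argument \emph{does} give $\alpha(Z/2)=\alpha(Z)$, since at $\mu=\tfrac12$ the two arguments of the minimum coincide. Iterating yields $\alpha(Z/2^n)=\alpha(Z)$ for every $n\in\mathbb{N}$, and then the decreasing property of $\lambda\mapsto\alpha(\lambda Z)$ from Proposition~\ref{Lmm:star}(iii) squeezes any $\mu\in(0,1]$ between some $1/2^n$ and $1$, giving $\alpha(\mu Z)=\alpha(Z)$. The paper takes an essentially dual route: it uses quasi-superadditivity inductively on $kX=X+\cdots+X$ to get $\alpha(kX)\geq\alpha(X)$ for $k\in\mathbb{N}$, combines this with star-shapedness for the reverse inequality, and then invokes the same monotonicity in $\lambda$ to pass from constancy on $\mathbb{N}$ to constancy on $\mathbb{R}_+$.
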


\begin{proof}
	Let  $\lambda\in[0,1]$ and $X,Y\in L^\infty$. If $\alpha$ is quasi-concave, then \[\alpha(\lambda X)=\alpha(\lambda X+(1-\lambda)0)\geq \alpha(X)\wedge\alpha(0)=\alpha(X),\] which, according to Proposition \ref{Lmm:star}, is star-shapedness. We have $\alpha(\lambda X)=\alpha(X)$ for Scale invariance. For the converse, we have \[\alpha(\lambda X+(1-\lambda)Y)\geq \alpha(\lambda X)\wedge\alpha((1-\lambda)Y)\geq\alpha(X)\wedge\alpha(Y),\] which is Quasi-concavity. For Scale invariance, let $k\in\mathbb{N}$. Then, $\alpha(kX)\geq\alpha(X)\geq\alpha(kX)$. Thus, $\lambda\to\alpha(\lambda X)$ is constant and equal to $\alpha(X)$ in $\mathbb{N}$, which jointly to Proposition \ref{Lmm:star} implies it is constant in $\mathbb{R}_+$. 
\end{proof}

In the context of star-shaped acceptability indexes, we have the following Theorem. It generalizes the result in Theorem \ref{Thm:back} in a sense; it provides an interplay through representation among star-shaped acceptability indexes, star-shaped risk measures, star-shaped acceptance sets, and quasi-concave acceptability indexes.

\begin{Thm}\label{Thm:main}
	Let $\alpha\colon L^\infty\to[0,\infty]$ and assume  the convention $\sup\emptyset=0$.  Then the following claims are equivalent:
	\begin{enumerate}
		\item $\alpha$ is a star-shaped acceptability index.
		\item there is a decreasing family of star-shaped acceptance sets $\{\mathcal{A}_x\}_{x>0}$ closed in the $a.s.$ convergence of bounded sequences such that the representation holds
		\begin{equation}\label{eq:dual3}
			\alpha(X)=\sup\left\lbrace  x>0\colon X\in\mathcal{A}_x\right\rbrace,\:\forall\:X\in L^\infty.
		\end{equation}
		\item there is an increasing family $\{\rho_{x}\}$ of Fatou continuous star-shaped risk measures such that the representation holds
		\begin{equation}\label{eq:dual}
			\alpha(X)=\sup\left\lbrace  x>0\colon\rho_{x}(X)\leq 0\right\rbrace,\:\forall\:X\in L^\infty.
		\end{equation}
	\end{enumerate}
	Moreover, each of these equivalent conditions implies the following
	\begin{enumerate}
		\item [(iv)]there is a family $\{\alpha_i\}_{i\in\mathcal{I}}$ of quasi-concave acceptability indexes such that the representation holds \begin{equation}\label{eq:alpha_max}
			\alpha(X)=\max_{i\in\mathcal{I}}\alpha_i(X),\:\forall\:X\in L^\infty.
		\end{equation}
		Such family can be chosen as the quasi-concave acceptability indexes dominated by $\alpha$, i.e., $\mathcal{I}=\{\beta\colon L^\infty\to[0,\infty]\colon\beta\:\text{is quasi-concave acceptability index and}\:\beta\leq\alpha\}$.
	\end{enumerate}
\end{Thm}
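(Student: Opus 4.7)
The plan is to close the cycle $(iii) \Rightarrow (i) \Rightarrow (ii) \Rightarrow (iii)$ by direct verification of axioms, and then to construct the family required in (iv) pointwise from the representation in (ii). The equivalence (ii) $\Leftrightarrow$ (iii) follows by applying, at each level $x > 0$, the standard bijection $\mathcal{A} \leftrightarrow \rho_\mathcal{A}$ recorded in Definition \ref{def:risk} and the subsequent acceptance-set definition: under this bijection, star-shapedness of the set matches star-shapedness of the risk measure, closedness under bounded $a.s.$ convergence matches Fatou continuity, the family $\{\mathcal{A}_x\}$ being decreasing in $x$ matches $\{\rho_x\}$ being increasing in $x$, and $X \in \mathcal{A}_x \iff \rho_x(X) \leq 0$ so the two representation formulas coincide.

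For (iii) $\Rightarrow$ (i), I would verify each axiom of a star-shaped acceptability index directly from the family $\{\rho_x\}$: $\rho_x(0) = 0$ yields $\alpha(0) = \infty$; translation invariance gives $\rho_x(C) = -C > 0$ for constants $C < 0$, hence $\alpha(C) = 0$; anti-monotonicity of each $\rho_x$ yields monotonicity of $\alpha$; star-shapedness of $\rho_x$ in its equivalent form $\rho_x(\lambda X) \leq \lambda \rho_x(X)$ for $\lambda \in [0,1]$, combined with Proposition \ref{Lmm:star}, yields star-shapedness of $\alpha$; finally Fatou continuity of $\alpha$ follows from an $\varepsilon$-argument that uses monotonicity of the family to descend from $\alpha(X_n) \geq x$ to $\rho_{x - \varepsilon}(X_n) \leq 0$ and then invokes Fatou continuity of $\rho_{x - \varepsilon}$. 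For (i) $\Rightarrow$ (ii), I would use the upper level sets $\mathcal{A}_x := \{X \in L^\infty : \alpha(X) \geq x\}$, $x > 0$: monotonicity and star-shapedness (via Proposition \ref{Lmm:star}) transfer immediately from $\alpha$; monetarity follows from $\alpha(0) = \infty$ together with weak expectation consistency; closedness follows from Fatou continuity of $\alpha$; and the representation is immediate from $\{x > 0 : \alpha(X) \geq x\} = (0, \alpha(X)]$.

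For (i) $\Rightarrow$ (iv), I would construct, for each $X_0 \in L^\infty$ with $c := \alpha(X_0)$,
\begin{equation*}
\mathcal{B}_x := \begin{cases} \{\lambda X_0 + Z : \lambda \in [0,1],\, Z \in L^\infty_+\}, & 0 < x \leq c, \\ L^\infty_+, & x > c, \end{cases}
\end{equation*}
and set $\beta_{X_0}(X) := \sup\{x > 0 : X \in \mathcal{B}_x\}$. Each $\mathcal{B}_x$ is convex, monotone, and closed under bounded $a.s.$ convergence (the last point by extracting a convergent subsequence of the scalars $\lambda_n \in [0,1]$ and using dominated convergence together with $X_0 \in L^\infty$), and $\mathcal{B}_x \subseteq \mathcal{A}_x$: for $x \leq c$, $X_0 \in \mathcal{A}_x$, so $\lambda X_0 \in \mathcal{A}_x$ by star-shapedness of $\mathcal{A}_x$ and $\lambda X_0 + Z \in \mathcal{A}_x$ by monotonicity, while the case $x > c$ is trivial. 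Via the already-established equivalence applied to the convex case, i.e.\ Theorem \ref{Thm:back} used on the induced increasing family of convex Fatou continuous risk measures $\{\rho_{\mathcal{B}_x}\}$, $\beta_{X_0}$ is a quasi-concave acceptability index with $\beta_{X_0} \leq \alpha$ and $\beta_{X_0}(X_0) = c$. The conclusion $\alpha(X) = \max_{\beta \in \mathcal{I}} \beta(X)$ then follows from the sandwich $\beta_X(X) \leq \sup_{\beta \in \mathcal{I}} \beta(X) \leq \alpha(X)$, with the supremum attained pointwise by $\beta_X$.

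The main obstacle I anticipate is verifying monetarity of $\mathcal{B}_x$ for $0 < x \leq c$, because a priori one could worry that some negative constant $m$ admits a decomposition $m = \lambda X_0 + Z$ with $\lambda \in [0,1]$ and $Z \in L^\infty_+$, violating $\inf\{m \in \mathbb{R} : m \in \mathcal{B}_x\} = 0$. This is ruled out by noting that $c = \alpha(X_0) > 0$ forces $\esssup X_0 \geq 0$ (otherwise $X_0 \leq -\varepsilon$ for some $\varepsilon > 0$, and monotonicity together with weak expectation consistency give $\alpha(X_0) \leq \alpha(-\varepsilon) = 0$, contradicting $c > 0$), whence no $\lambda \in (0,1]$ can yield $\lambda X_0 \leq m$ almost surely when $m < 0$.
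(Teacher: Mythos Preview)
Your argument for the equivalence $(i)\Leftrightarrow(ii)\Leftrightarrow(iii)$ is essentially the paper's, with the same upper level sets $\mathcal{A}_x=\{X:\alpha(X)\geq x\}$ and the same bijection with $\rho_{\mathcal{A}_x}$. Your $\varepsilon$-descent in the Fatou step of $(iii)\Rightarrow(i)$ is in fact slightly more careful than the paper, which writes ``$\alpha(X_n)\geq x$ is equivalent to $\rho_x(X_n)\leq 0$''; for an arbitrary increasing family this need not hold at the endpoint, so passing to $\rho_{x-\varepsilon}$ (equivalently, to all $y<x$) is the right move.

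The genuine difference is in $(iv)$. The paper builds, for each $Y$ and each level $x$, the convex set
\[
\mathcal{A}_{Y,x}=\mathrm{cl}^{*}\bigl(\mathrm{conv}\bigl((\{Y+\rho_x(Y)\}+L^\infty_+)\cup\{0\}\bigr)+L^\infty_+\bigr),
\]
so that the induced convex risk measure satisfies $\rho_{\mathcal{A}_{Y,x}}(Y)=\rho_x(Y)$ and $\rho_x=\min_Y \rho_{\mathcal{A}_{Y,x}}$. This buys more than statement $(iv)$: it simultaneously exhibits each star-shaped $\rho_x$ as a pointwise minimum of Fatou continuous convex risk measures, which the paper then reuses in Corollary~\ref{Crl:cast} and Remark~\ref{Rmk:set}. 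Your construction is more elementary: for each $X_0$ with $c=\alpha(X_0)$ you take the \emph{single} convex set $\{\lambda X_0+Z:\lambda\in[0,1],\,Z\in L^\infty_+\}$ and hold it constant on $(0,c]$, switching to $L^\infty_+$ above $c$. The resulting $\beta_{X_0}$ is a three-valued quasi-concave index ($0$, $c$, or $\infty$) with $\beta_{X_0}\leq\alpha$ and $\beta_{X_0}(X_0)=\alpha(X_0)$, which is exactly what $(iv)$ requires; your handling of monetarity via $\esssup X_0\geq 0$ when $c>0$ is correct. So your route to $(iv)$ is shorter and avoids closures and convex hulls, at the cost of not delivering the level-by-level decomposition of $\rho_x$ that the paper exploits downstream.
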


\begin{proof}
	(i)$\implies$ (ii).	Define for any $x>0$ the set \[\mathcal{A}_x=\{X\in L^\infty\colon\alpha(X)\geq x\}.\]  It is clear that $\mathcal{A}_x\subseteq\mathcal{A}_y$ for any $x>y$. Moreover, from the Monotonicity of $\alpha$, we have that each $\mathcal{A}_x$ is monotone. Furthermore, due to Star-shapedness of $\alpha$, $\mathcal{A}_x$ is star-shaped. Finally, from Weak expetation consistency, we get that $\inf\{m\in\mathbb{R}\colon m\in\mathcal{A}_x\}=0>-\infty$. For closedness, let $\{X_n\}\subseteq \mathcal{A}_x$  be bounded and $\lim\limits_{n\to\infty}X_n=X$. Then, $\alpha(X_n)\geq x$. From Fatou continuity of $\alpha$, we obtain that $\alpha(X)\geq x$, which implies $X\in\mathcal{A}_x$.
	
	(ii)$\implies$ (iii). Define for any $x>0$ the map \[\rho_x(X)=\rho_{\mathcal{A}_x}(X)=\inf\{m\in\mathbb{R}\colon X+m\in \mathcal{A}_x\},\:\forall\:X\in L^\infty.\] We now show that $\{\rho_x\}_{x>0}$ is an increasing family of Fatou continuous star-shaped risk measures. Since $\{\mathcal{A}_x\}_{x>0}$ is decreasing, we have that $\{\rho_x\}_{x>0}$ is increasing. Proposition 2 in \cite{Castagnoli2021} implies that each $\rho_x$ is a star-shaped risk measure and $\mathcal{A}_x=\mathcal{A}_{\rho_x}=\{X\in L^\infty\colon \rho_x(X)\leq 0\}$.	Hence, $\alpha(X)\geq x$ if and only if $\rho_x(X)\leq0$.
	For Fatou continuity of $\rho_x$ let $\{X_n\}$ be bounded such that $\lim\limits_{n\to\infty}X_n=X$ and $\rho_x(X_n)\leq m$ for any $n\in\mathbb{N}$. By Translation invariance of $\rho_x$, we get $\rho_x(X_n+m)\leq 0$. Thus $\{X_n+m\}\subseteq \mathcal{A}_x$. By closedness in the $a.s.$ convergence of bounded sequences, we obtain $X+m\in \mathcal{A}_x$. This is equivalent to $\rho_x(X)\leq m$, which implies $\rho(X)\leq \liminf\limits_{n\to\infty}\rho_x(X_n)$.
	
	(iii)$\implies$(i). Let $\alpha$ be given by		\eqref{eq:dual}. Monotonicity follows since for any $X\geq Y$, we have $\rho_x(X)\leq \rho_x(Y)$ for any $x>0$. For Star-shapedness, let $\lambda>1$. Then $\rho_x(\lambda X)\geq \lambda\rho_x(X)$ for any $\rho_x$. This fact implies		\[\alpha(\lambda X)=\sup\left\lbrace  x>0\colon\rho_{x}(\lambda X)\leq 0\right\rbrace\leq \sup\left\lbrace  x>0\colon\rho_{x}(X)\leq 0\right\rbrace=\alpha(X).\] Further, since $\rho_x(0)=0$ for any $x$ we directly have \[\alpha(0)=\sup\left\lbrace  x>0\colon\rho_{x}(0)\leq 0\right\rbrace=\sup\{x>0\}=\infty.\]	
	For Weak expectation consistency, let $C\in\mathbb{R}$ such that  $C<0$. By Monotonicity, we get $0\leq\alpha (C)=\sup\left\lbrace  x>0\colon C\geq 0\right\rbrace=0$. Regarding to Fatou continuity, let $\{X_n\}$ be bounded such that $\lim\limits_{n\to\infty}X_n= X$ and $\alpha (X_n)\geq x$. This is equivalent to $\rho_x(X_n)\leq0$ for any $n\in\mathbb{N}$. By Fatou continuity of the family $\{\rho_x\}$ we get $\rho_x(X)\leq\liminf\limits_{n\to\infty}\rho_x(X_n)\leq 0$. Then, we obtain $\alpha(X)\geq x$. This is equivalent to $\alpha(X)\geq\limsup\limits_{n\to\infty}\alpha(X_n)$. 
	
	(iii)$\implies$(iv). Let $\alpha\colon L^\infty\to[0,\infty]$ be represented under the monotone star-shaped families $\{\rho_x\}_{x>0}$. For any $x>0$ and any $Y\in L^\infty$, we define  \[\mathcal{A}_{Y,x}=cl^{*}(conv(\{Y+\rho_x(Y)\}\cup\{0\})+L^\infty_+),\]  where $cl^{*}$ means the closure regarding weak$^{*}$ topology. Furthermore, note that if $X\in \mathcal{A}_{Y,x}$, then $X=\lim\limits_{n\to\infty}\lambda_n(Y+\rho_x(Y)+Z_n)$ for some $\{Z_n\}\subseteq L^\infty_+$ bounded and $\{\lambda_n\}\subseteq[0,1]$. This fact implies that \[\rho_x(X)\leq\liminf\limits_{n\to\infty}\lambda_n\rho_x(Y+\rho_x(Y)+Z_n)\leq\liminf\limits_{n\to\infty}\lambda_n(\rho_x(Y)-\rho_x(Y))=0.\] Thus, $\mathcal{A}_{Y,x}\subseteq\mathcal{A} _{\rho_x}$. We then have that $\mathcal{A}_{Y,x}$ is convex and weak$^{*}$ closed since $L^\infty_+$ is a convex cone, monotone since $\mathcal{A}_{Y,x}+L^\infty=\mathcal{A}_{Y,x}$, and fulfills the condition $\inf\{m\in\mathbb{R}\colon m\in\mathcal{A}_{Y,x}\}=0$. Then, $\rho_{\mathcal{A}_{Y,x}}$ is a Fatou continuous convex risk measure, see Theorem 4.33 in \cite{Follmer2016}, such that $\rho_{\mathcal{A}_{Y,x}}\geq\rho_x$ for any $Y\in L^\infty$ and any $x>0$. Nonetheless, since $X+\rho_x(X)\in\mathcal{A}_{X,x}$ we get \[\rho_{\mathcal{A}_{X,x}}(X)=\inf\{m\in\mathbb{R}\colon X+m\in\mathcal{A}_{X,x}\}\leq \rho_x(X).\] Hence, we obtain that $\rho_{\mathcal{A}_{X,x}}(X)=\rho_x(X)$, which immediately implies \[\rho_x(X)=\min\limits_{Y\in L^\infty}\rho_{\mathcal{A}_{Y,x}}(X),\:\forall\:X\in L^\infty,\:\forall\:x>0.\]
	Now, for any $Y\in L^\infty$ we define \[\alpha_Y(X)=\sup\{x>0\colon\rho_{\mathcal{A}_{Y,x}}(X)\leq 0\}.\] By Theorem \ref{Thm:back}, we have that each $\alpha_Y$ defines a quasi-concave acceptability index. Furthermore, for any $X\in L^\infty$ we have that \[\alpha_Y(X)\leq\sup\{x>0\colon\rho_{x}(X)\leq 0\}=\alpha(X)=\sup\{x>0\colon\rho_{\mathcal{A}_{X,x}}(X)\leq 0\}=\alpha_X(X).\]
	Hence, we have that \[\alpha(X)=\max\limits_{Y\in L^\infty}\alpha_Y(X),\:\forall\:X\in L^\infty.\] Moreover, let $\mathcal{I}=\{\beta\colon L^\infty\to[0,\infty]\colon\beta\:\text{is quasi-concave acceptability index and}\:\beta\leq\alpha\}$. We have that $\alpha(X)\geq\sup_{\mathcal{I}}\beta (X)$. Since $\alpha_X\in\mathcal{I}$, we have that \[\alpha(X)=\max\limits_{\mathcal{I}}\beta(X),\:\forall\:X\in L^\infty.\]
\end{proof}

\begin{Rmk}\label{Rmk:set}
	The set $\mathcal{I}$ in the representation \eqref{eq:alpha_max} is not unique. Nonetheless, under relaxation, we have some unique results. For any set $\mathcal{I}$ of quasi-concave acceptability indexes, define its relaxation as \[\mathcal{I}^{*}=\left\lbrace \beta\colon L^\infty\to[0,\infty]\colon\beta\:\text{is quasi-concave acceptability index and}\:\beta\leq\max\limits_{\mathcal{I}}\beta_i\right\rbrace .\] Thus, if $\alpha=\max\limits_{\mathcal{I}_1}\beta_i=\max\limits_{\mathcal{I}_2}\beta_i$, then we directly have that \[\mathcal{I}^{*}_1=\mathcal{I}_2^{*}=\{\beta\colon L^\infty\to[0,\infty]\colon\beta\:\text{is quasi-concave acceptability index and}\:\beta\leq\alpha\}.\] 
	In particular, these sets are dependent on $\alpha$ and not on a specific representation. Moreover, we can recover both $\mathcal{A}_x$ and $\rho_x$ from such representation since we have that
	\begin{align*} \mathcal{A}_x&=\left\lbrace X\in L^\infty\colon  \exists\:i\in\mathcal{I}\;\text{s.t.}\:\alpha_i(X)\geq x\right\rbrace=\bigcup_{\mathcal{I}}\{X\in L^\infty\colon\alpha_i(X)\geq x\}= \bigcup_{\mathcal{I}}\mathcal{A}^i_x.\\
		\rho_x(X)&=\inf\left\lbrace m\in\mathbb{R}\colon X+m\in\bigcup_{\mathcal{I}}\mathcal{A}^i_x\right\rbrace\\
		&=\inf\left\lbrace m\in\mathbb{R}\colon\exists\:i\in\mathcal{I}\;\text{s.t.} \rho_x^i(X)\leq m\right\rbrace=\inf\left(\inf\limits_\mathcal{I}\rho_x^i(X),\infty \right)=\inf\limits_\mathcal{I}\rho_x^i(X).
	\end{align*} 
\end{Rmk}

The converse implication of (iv)$\implies$(iii) in Theorem \ref{Thm:main} is not assured as Fatou continuity (in fact, any upper semicontinuity) of a family of maps is not in general preserved by the supremum operation. Nonetheless, by dropping Fatou continuity, we have an interesting
characterization result of a monotone, star-shaped, and weak expectation consistent functional.

\begin{Prp}
	Let $\alpha\colon L^\infty\to[0,\infty]$ and assume  the convention $\sup\emptyset=0$. Then, $\alpha$ is monotone, star-shaped, and weak expectation consistent if and only if there is a family $\{\alpha_i\}_{i\in\mathcal{I}}$ of monotone, quasi-concave and weak expectation consistent acceptability indexes such that the representation holds \begin{equation*}\label{eq:alpha_max2}
		\alpha(X)=\max_{i\in\mathcal{I}}\alpha_i(X),\:\forall\:X\in L^\infty.
	\end{equation*}
	Such family can be chosen as the quasi-concave acceptability indexes dominated by $\alpha$, i.e. $\mathcal{I}=\{\beta\colon L^\infty\to[0,\infty]\colon\beta\;\text{is quasi-concave and monotone acceptability index and}\;\beta\leq\alpha\}$. 
	
\end{Prp}

\begin{proof}
	The claim follows as  (iii)$\implies$(iv) in Theorem \ref{Thm:main} by considering the acceptance sets  $\mathcal{A}_{Y,x}=conv(\{Y+\rho_x(Y)\}\cup\{0\})+L^\infty_+$. The converse implication is straightforward.	
\end{proof}

\begin{Rmk}
	In the previous proposition, weak expectation consistency is important to achieve the result.
	This claim can be directly compared with Theorem 5 in \cite{Castagnoli2021} where translation
	invariance is assumed, and Theorem 5.1 of \cite{Han2022} where one should rely on quasistar-shapedness (see Remark \ref{rmk:lambda}) instead of star-shapedness when weak expectation consistency is replaced by
	normalization.
\end{Rmk}

In the conditions of Theorem \ref{Thm:main}, we can recover the original results of Theorem 1 in \cite{Cherny2009} and Proposition 3 in \cite{Gianin2013}. We have a Corollary regarding this topic, which has been stated without proof.

\begin{Crl}	
	\label{rmk:PH}Let $\alpha\colon L^\infty\to[0,\infty]$ and assume  the convention $\sup\emptyset=0$. If the equivalent properties in   Theorem \ref{Thm:main} hold, then we have the following: \begin{enumerate}
		\item  $\alpha$ has Scale invariance if and only if $\{\mathcal{A}_x\}_{x>0}$ is composed by cones if and only if each member in $\{\rho_x\}_{x>0}$ fulfills Positive Homogeneity. 
		\item $\alpha$ has Quasi-concavity if and only if $\{\mathcal{A}_x\}_{x>0}$ is composed by convex sets if and only if each member in $\{\rho_x\}_{x>0}$ fulfills Convexity.
	\end{enumerate}
\end{Crl}

We end this section by exposing some practical examples where our main results are directly applicable. The main focus is to exemplify where both the maximum acceptability indexes and the representation over star-shaped risk measures/utilities are useful.

\begin{Exm}
	Consider a financial institution with $K\in\mathbb{N}$ business lines (or even a regulation system with $K$ institutions), and the manager of each business line adopts quasi-concave acceptability indexes $\alpha_i(X),\:i=1,\dots,K$ to assess the performance from a
	given global position $X$. It is interesting to resume this $K$-dimensional information into one number to determine whether $X$ is deemed acceptable for such an institution. The
	most conservative choice can be obtained by $\alpha_{\min}(X)=\min_{i\in\{1,\dots,K\}}\alpha_i(X)$, in a way that the most pessimistic acceptability is taken in order to satisfy all business lines. It is easy to verify that such a function is also quasi-concave. Despite this solution being cautious, it can lead to a pathological situation where more
	rigorous business lines may be concerned that this approach ignores what the less cautious ones classify as acceptability. It would generate a moral hazard problem by incentivizing less rigorous business lines to support risky asset purchases with high acceptability. The funds cover the possible losses on the more cautious business lines. In order to circumvent this situation, the institution could decide to delegate the decision to the business line with the larger acceptability index while assuring only limited coverage for losses. Thus, the corresponding acceptability index would be $\alpha_{\max}(X)=\max_{i\in\{1,\dots,K\}}\alpha_i(X)$, which is, by Theorem \ref{Thm:main}, star-shaped but not quasi-concave.
\end{Exm}

\begin{Exm}
	A context where star-shaped acceptability indexes are necessary is non-concave utilities. Let $\{u_x\}_{x>0}$ represent concave utility functions with aversion parameter $x$. The interpretation is that $u_x$ represents more risk-averse attitudes/preferences as $x$ increases. In this sense, one can build for some position $X$ an acceptability index based on expected utility criteria over some threshold as $\alpha(X)=\sup\left\lbrace  x>0\colon E[u_x(X)]\geq K\right\rbrace$. This map is quasi-concave. Concavity of $u_x$ corresponds to a strong form of risk aversion, while more flexible utility functions with local convexities have been well documented, see  \cite{Landsberger1990} for instance. In this case, we consider the broader class of utility functions such
	that $\lambda \to \frac{u_x(\lambda y)}{\lambda}$ decreasing on $(0,\infty)$ for any $y\in\mathbb{R}$, which is equivalent to star-shapedness of $u_x$. This condition, which is weaker than concavity, implies that the acceptability index defined as $\alpha(X)=\sup\left\lbrace  x>0\colon E[u_x(X)]\geq K\right\rbrace$ is, by Theorem \ref{Thm:main}, star-shaped but not quasi-concave.
\end{Exm}

\begin{Exm}
	This example follows similar reasoning as that exposed in \cite{Castagnoli2021}. Assume that a supervising agency consists
	of a board of experts $i\in \mathcal{I}$, which proposes a quasi-concave acceptability index $\alpha_i$. Suppose the agency has to aggregate these opinions on some global portfolio $X$ for an institution. In that case, it will have $\alpha_f(X)=f(\alpha_\mathcal{I}(X))$, where $\alpha_\mathcal{I}(X)=\{\alpha_i(X),\:i\in\mathcal{I}\}$ and $f$ is some aggregation map, see \cite{Righi2019} for details on combination of risk measures. When $f$ is any weighting average under some weight scheme $\mu$, one gets, under the necessary measurability properties, $\alpha_f(X)=\int_\mathcal{I}\alpha_i(X)d\mu$. This map is quasi-concave, as is the $\alpha_{\min}$ from previous paragraphs. However, many other choices for $f$, such as median, general order statistics, and supremum, are typically not quasi-concave but lead to star-shaped acceptability indexes. In the following section \ref{sec:var}, we expose and study some concrete cases of acceptability indexes based on quantiles.
\end{Exm}

\section{Value at Risk based acceptability indexes}\label{sec:var}

Let VaR be defined as $VaR^p(X)=-F_{X}^{-1}(p),\:p\in[0,1]$, with $\mathcal{A}_{VaR^p}=\left\lbrace X\in L^\infty:\mathbb{P}(X<0)\leq p\right\rbrace $. We then consider $\rho_x=VaR^\frac{1}{1+x}$, which is increasing in $x$, as a building block. 
It is also possible to consider more general formulations such as $\rho_x=VaR^{u(x)}$ for a decreasing function $u\colon \mathbb{R}_+\to [0,1]$. Nonetheless, this paper focuses on this specific choice for the ratio since it is the simpler decreasing transformation $u$ from non-negative numbers to the unit interval. Moreover, it is easily handled, and such a formulation naturally appears in other contexts.

\begin{Def}
	The VaR based acceptability index is a functional $\alpha_{VaR}\colon L^\infty\to[0,\infty]$ defined as\begin{equation}
		\alpha_{VaR}(X)=\sup\left\lbrace x>0\colon VaR^{\frac{1}{1+x}}(X)\leq 0\right\rbrace. 
	\end{equation}
\end{Def}

Since VaR is a Fatou continuous monetary risk measure that fulfills positive homogeneity, it is star-shaped and, by Theorem \ref{Thm:main}, $\alpha_{VaR}$ is a star-shaped acceptability index. By Remark \ref{rmk:PH} and Proposition \ref{crl:LI}, we also have that this index is Scale and Law Invariant. Furthermore, we get that the acceptance sets in the representation \eqref{eq:dual} are as \[\mathcal{A}_x=\mathcal{A}_{VaR^{\frac{1}{1+x}}}=\left\lbrace X\in L^\infty\colon VaR^{\frac{1}{1+x}}(X)\leq 0\right\rbrace =\left\lbrace X\in L^\infty\colon \mathbb{P}(X<0)\leq\frac{1}{1+x}\right\rbrace.\]

Combinations of VaR at distinct levels once such combination function fulfills some properties; see \cite{Righi2019} for details. A typical formulation is the one for distortion risk measures, which are defined through Choquet integrals as \begin{align*}
	\rho_g(X)
	&=\int_{-\infty}^{0}(g(\mathbb{P}(-X\geq x)-1)dx+\int_{0}^{\infty}g(\mathbb{P}(-X\geq x))dx,
\end{align*}
where $g\colon[0,1]\to[0,1]$ is a distortion, which is normalized and increasing. Such measures are law-invariant, monetary, and positively homogeneous. In atomless probability spaces, the convexity of $\rho$ is equivalent to $g$ being concave. Furthermore, if $g_1\geq g_2$, then $\rho_{g_1}\geq\rho_{g_2}$. Thus, we have a star-shaped acceptability index by taking an increasing family of distortions, not necessarily concave.

\begin{Def}
	The distortion based acceptability index is a functional $\alpha_{G}\colon L^\infty\to[0,\infty]$ defined as\begin{equation}
		\alpha_{G}(X)=\sup\left\lbrace x>0\colon \rho_{g_x}(X)\leq 0\right\rbrace, 
	\end{equation}
	where $G=\{g_x\}_{x>0}$ is an increasing family of distortion functions.
\end{Def}

Possible choices for $G$ are $g_x(y)=1_{\{y\geq\frac{1}{1+x}\}}$, in which case we recover $\rho_x=VaR^\frac{1}{1+x}$. A concave choice is regarding the well-known Expected Shortfall (ES), which is a Fatou continuous law invariant coherent risk measure defined as $ES^{p}(X)=\frac{1}{p}\int_0^p VaR^s(X)ds,\:p\in(0,1]$ and $ES^0(X)=VaR^0(X)=-\operatorname{ess}\inf X$. We have $\mathcal{A}_{ES^p}=\left\lbrace X\in L^\infty:\int_0^p VaR^s(X)ds\leq0\right\rbrace $ and  $\mathcal{Q}_{ES^p}=\left\lbrace \mathbb{Q}\in\mathcal{Q} : \frac{d\mathbb{Q}}{d\mathbb{P}}\leq\frac{1}{p} \right\rbrace$. By taking the distortion $g_x(y)=y(1+x)\wedge1$ we obtain $\rho_x=ES^\frac{1}{1+x}$. In this situation we have \[\mathcal{A}_x=\mathcal{A}_{ES^\frac{1}{1+x}}=\left\lbrace X\in L^\infty\colon ES^{\frac{1}{1+x}}(X)\leq 0\right\rbrace =\left\lbrace X\in L^\infty\colon \int_0^{\frac{1}{1+x}}VaR^s(X)ds\leq 0\right\rbrace.\]

Other properties of Definition \ref{def:accept} are also preserved in the chain of implications in Theorem \ref{Thm:main}. We now focus on Law Invariance and consistency (Monotonicity) concerning SSD. If the space is atomless, Law invariance is inherited by Theorem 2.1  in \cite{Mao2020}.

\begin{Prp}\label{rmk:LI}
	In the conditions of Theorem \ref{Thm:main} we have:
	\begin{enumerate}
		\item  $\alpha$ is law invariant if and only if $\{\mathcal{A}_x\}_{x>0}$ is law invariant ($X\in\mathcal{A}_x$ and $Y\sim X$ implies $Y\in\mathcal{A}_x$) if and only if $\{\rho_x\}_{x>0}$ is law invariant if $\{\alpha_i\}_{i\in\mathcal{I}}$ are law invariant.
		\item $\alpha$ is consistent to SSD if and only if $\{\mathcal{A}_x\}_{x>0}$ is consistent to SSD ($X\in\mathcal{A}_x$ and $Y\succeq X$ implies $Y\in\mathcal{A}_x$) if and only if $\{\rho_x\}_{x>0}$ is consistent to SSD if and only if $\{\alpha_i\}_{i\in\mathcal{I}}$ are consistent to SSD.
	\end{enumerate}
\end{Prp}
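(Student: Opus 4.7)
The plan is to verify the asserted equivalences by running the cycle $\alpha \Rightarrow \{\mathcal{A}_x\} \Rightarrow \{\rho_x\} \Rightarrow \alpha$ for each of the two properties, using the explicit correspondences $\mathcal{A}_x = \{Z \in L^\infty : \alpha(Z) \geq x\}$, $\rho_x = \rho_{\mathcal{A}_x}$, and $\alpha(X) = \sup\{x > 0 : \rho_x(X) \leq 0\}$ produced in the proof of Theorem \ref{Thm:main}, and then to address the family representation \eqref{eq:alpha_max} separately.

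First, for the three-term cycle I would proceed uniformly for both properties. Going from $\alpha$ to $\mathcal{A}_x$: if $Z \in \mathcal{A}_x$ and either $W \sim Z$ or $W \succeq Z$, then $\alpha(W) \geq \alpha(Z) \geq x$, hence $W \in \mathcal{A}_x$. Going from $\mathcal{A}_x$ to $\rho_x$: the key fact is that both relations are translation invariant, i.e.\ $W \sim Z \Leftrightarrow W + m \sim Z + m$ and $W \succeq Z \Leftrightarrow W + m \succeq Z + m$, so the infimum defining $\rho_x(W) = \inf\{m : W + m \in \mathcal{A}_x\}$ equals $\rho_x(Z)$ in the law-invariant case and is dominated by $\rho_x(Z)$ in the SSD case. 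Going from $\rho_x$ back to $\alpha$ is immediate from the representation \eqref{eq:dual}, since $\sim$ preserves the set $\{x > 0 : \rho_x(X) \leq 0\}$ and $\succeq$ can only enlarge it.

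For the family representation \eqref{eq:alpha_max}, the easy direction from $\{\alpha_i\}$ to $\alpha$ is immediate: both properties are preserved under pointwise maxima, since $X \sim Y$ yields $\alpha_i(X) = \alpha_i(Y)$ for every $i$ and $X \succeq Y$ yields $\alpha_i(X) \geq \alpha_i(Y)$ for every $i$. For the converse I would refine the set $\mathcal{I}$ used in Theorem \ref{Thm:main} to
\[
\mathcal{I}_P = \{\beta \colon L^\infty \to [0,\infty] : \beta \text{ is a quasi-concave acceptability index with property } P \text{ and } \beta \leq \alpha\},
\]
and show $\alpha(X) = \max_{\beta \in \mathcal{I}_P} \beta(X)$. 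A witness $\tilde\alpha_Y \in \mathcal{I}_P$ with $\tilde\alpha_Y(Y) = \alpha(Y)$ would be produced by replacing $\mathcal{A}_{Y,x}$ in the proof of Theorem \ref{Thm:main} with a $P$-invariant enlargement: for law invariance, the closure in $a.s.$ convergence of bounded sequences of the convex hull of $(\bigcup_{Y' \sim Y}\{Y' + \rho_x(Y')\} + L^\infty_+) \cup \{0\}$; for SSD, the analogous construction with union over $Y' \succeq Y$. Since $\rho_x$ has property $P$ (established in the first half), one has $\rho_x(Y') = \rho_x(Y)$ (resp.\ $\rho_x(Y') \leq \rho_x(Y)$) for every $Y'$ in the union, so every generator lies in $\mathcal{A}_{\rho_x}$, and the induced convex risk measure dominates $\rho_x$ while still agreeing with it at $Y$.

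The main obstacle is precisely the last step: $\mathcal{A}_{\rho_x}$ is only star-shaped, not convex, so one must check that the $P$-invariant convex hull of the generating set remains inside $\mathcal{A}_{\rho_x}$. The leverage is that $\mathcal{A}_{\rho_x}$ itself already inherits $P$ from the established cycle, so it is stable under $\sim$ (respectively $\succeq$) \emph{before} the convex hull is taken; convexity is then added at the level of this $P$-closed superset, allowing the standard construction of quasi-concave dominators to go through as in the proof of Theorem \ref{Thm:main}.
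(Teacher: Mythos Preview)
Your approach is essentially the same as the paper's: both run the cycle $\alpha \Leftrightarrow \{\mathcal{A}_x\} \Leftrightarrow \{\rho_x\}$ directly from the correspondences in Theorem~\ref{Thm:main}, and both handle the family $\{\alpha_i\}$ by modifying the sets $\mathcal{A}_{Y,x}$ so as to build in law invariance or SSD consistency. The paper is in fact terser than you are here; it simply declares $\mathcal{A}'_{Y,x}=\{X : X\sim Z,\ Z\in\mathcal{A}_{Y,x}\}$ (respectively the $\succeq$-closure) and refers back to Theorem~\ref{Thm:main} without further verification. Your three-term cycle and the easy direction for $\{\alpha_i\}$ are correct and more carefully argued than in the paper.

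There is, however, a genuine gap in your final paragraph that the paper's sketch does not resolve either. Your enlarged generating set $\bigcup_{Y'\sim Y}\{Y'+\rho_x(Y')\}$ indeed lies in $\mathcal{A}_{\rho_x}$, but once you take the convex hull you obtain elements of the form $\sum_j \lambda_j\bigl(Y'_j+\rho_x(Y)\bigr)$ with distinct $Y'_j\sim Y$ and $\sum_j\lambda_j\leq 1$. Star-shapedness of $\rho_x$ controls only the overall scaling by $\sum_j\lambda_j$; it says nothing about the inner convex combination $\sum_j(\lambda_j/\sum_k\lambda_k)\,Y'_j$, and since $\rho_x$ is not convex there is no reason this lands in $\mathcal{A}_{\rho_x}$. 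So the inclusion $\mathcal{A}'_{Y,x}\subseteq\mathcal{A}_{\rho_x}$---and hence $\rho_{\mathcal{A}'_{Y,x}}\geq\rho_x$, which you need for $\alpha_Y\leq\alpha$---is not established. Your sentence about ``convexity being added at the level of this $P$-closed superset'' does not circumvent this: $P$-closedness of $\mathcal{A}_{\rho_x}$ lets you enlarge by $\sim$ or $\succeq$, not by convex combination. The same objection applies to the paper's $\mathcal{A}'_{Y,x}$, which need not be convex. A cleaner route is to bypass the explicit $\mathcal{A}_{Y,x}$ construction and use instead that each law-invariant (respectively SSD-consistent) star-shaped $\rho_x$ is a pointwise minimum of law-invariant (respectively SSD-consistent) \emph{convex} risk measures---this is exactly the content of the Mao--Wang and Castagnoli et al.\ results the paper cites around this Proposition and in Proposition~\ref{crl:LI}---and then assemble the quasi-concave $\alpha_i$ from those as in Corollary~\ref{Crl:cast}.
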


\begin{proof}
	For (i), the equivalences arise by following similar steps as those in the proof of (iii)$\implies$(i) of Theorem \ref{Thm:main} by choosing $\mathcal{A}_{Y,x}^\prime =\{X\in L^\infty\colon X\sim Y,\:Y\in \mathcal{A}_{Y,x}\}$. 
	
	Regarding (ii), the steps are similar to those regarding usual Monotonicity in a.s. partial order in Theorem \ref{Thm:main} by taking $\mathcal{A}_{Y,x}^\prime =\{X\in L^\infty\colon X\succeq Y,\:Y\in \mathcal{A}_{Y,x}\}$.
\end{proof}

\begin{Rmk}
	It is possible that $\alpha$ be law invariant but  $\{\alpha_i\}_{i\in\mathcal{I}}$ are not all law invariant. The intuitive reason is that
	law-invariant quasi-concave acceptability indexes respect SSD order, so
	by taking a maximum, one arrives at an acceptability that is
	consistent with SSD order. Not all law-invariant star-shaped acceptability indexes respect SSD order, as for the VaR-based one. 
\end{Rmk}

Under this discussion, we can have the following result for law invariant acceptability indexes in atomless probability spaces directly influenced by distortion-based acceptability indexes, particularly VaR-based ones.

\begin{Prp}\label{crl:LI}
	Under the equivalent conditions in Theorem \ref{Thm:main} on an atomless probability space, we have that:
	\begin{enumerate}
		\item   $\alpha$ is law invariant if and only if there is an increasing family of star-shaped sets $\{G_x\}_{x>0}$ of decreasing functions  $g\colon(0,1)\to\mathbb{R}$ with $g(1^-)\geq 0$ such that	\begin{equation}\label{eq:dual8}
			\alpha(X)=\sup\left\lbrace  x>0\colon\inf_{g\in G_x}\sup\limits_{y>0}\left\lbrace VaR^{\frac{1}{1+y}}(X)-g\left(\frac{1}{1+y}\right) \right\rbrace \leq 0\right\rbrace,\:\forall\:X\in L^\infty. 
		\end{equation}\label{eq:dual7}
		\item   $\alpha$ is SSD consistent if and only if  there is an increasing family of star-shaped sets $\{G_x\}_{x>0}$ of decreasing functions  $g\colon(0,1)\to\mathbb{R}$ with $g(1^-)\geq 0$ such that
		\begin{equation}
			\alpha(X)=\sup\left\lbrace  x>0\colon\inf_{g\in G_x}\sup\limits_{y>0}\left\lbrace ES^{\frac{1}{1+y}}(X)-g\left(\frac{1}{1+y}\right) \right\rbrace \leq 0\right\rbrace,\:\forall\:X\in L^\infty. 
		\end{equation}
	\end{enumerate} 
\end{Prp}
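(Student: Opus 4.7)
Plan. The strategy is to reduce both statements to representation questions for Fatou continuous star-shaped risk measures that are law invariant (respectively SSD consistent), via Theorem \ref{Thm:main} and Proposition \ref{rmk:LI}, and then to apply Kusuoka-type representations on atom-less spaces using VaR in (i) and ES in (ii).

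For the forward direction of (i), let $\alpha$ be a law invariant star-shaped acceptability index. Theorem \ref{Thm:main} furnishes an increasing family $\{\rho_x\}_{x>0}$ of Fatou continuous star-shaped risk measures with $\alpha(X)=\sup\{x>0\colon\rho_x(X)\leq 0\}$, and Proposition \ref{rmk:LI}(i) makes each $\rho_x$ law invariant. I would then carry out a law invariant refinement of the min-of-convex decomposition in Corollary \ref{Crl:cast}: repeating the acceptance set construction $\mathcal{A}_{Y,x}$ from the proof of Theorem \ref{Thm:main}, but starting from the law invariant orbit $\{Z\in L^\infty\colon F_Z=F_{Y+\rho_x(Y)}\}$ before taking the convex hull, closure in $a.s.$ convergence of bounded sequences, and translation by $L^\infty_+$. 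This yields convex, law invariant, monetary, closed acceptance sets $\mathcal{A}^{LI}_{Y,x}$ whose induced risk measures $\rho^{LI}_{Y,x}$ are Fatou continuous, convex, and law invariant, and satisfy $\rho_x=\min_{Y\in L^\infty}\rho^{LI}_{Y,x}$.

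At this point I would invoke the quantile-based Kusuoka-type representation for law invariant Fatou continuous convex risk measures on atom-less spaces, in the spirit of \cite{Mao2020}: each such $\rho$ admits $\rho(X)=\sup_{y>0}\{VaR^{\frac{1}{1+y}}(X)-g(\frac{1}{1+y})\}$ for some decreasing $g\colon(0,1)\to\mathbb{R}$ with $g(1^-)\geq 0$ (obtained via the change of variables $p=\frac{1}{1+y}$ in the natural penalty). Collecting the $g_Y$'s arising from the $\rho^{LI}_{Y,x}$ into $G_x$ produces the desired inf-sup representation; monotonicity of $\{G_x\}$ in $x$ follows from monotonicity of $\{\rho_x\}$, and star-shapedness of $G_x$ is inherited from star-shapedness of the $\mathcal{A}_x$. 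The converse is routine: each $\rho_g(X)=\sup_{y>0}\{VaR^{\frac{1}{1+y}}(X)-g(\frac{1}{1+y})\}$ is a law invariant monetary Fatou continuous convex risk measure, so $\rho_x=\inf_{g\in G_x}\rho_g$ is a law invariant Fatou continuous star-shaped risk measure by Corollary \ref{Crl:cast}, and Theorem \ref{Thm:main} together with Proposition \ref{rmk:LI}(i) returns that $\alpha$ is a law invariant star-shaped acceptability index.

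Part (ii) proceeds along the same blueprint, with SSD consistency in place of law invariance and the ES-based Kusuoka representation replacing the VaR-based one, using Proposition \ref{rmk:LI}(ii) for the transfer step and the convex-order analogue of the quantile representation. The main difficulty I anticipate is the law invariant refinement of Castagnoli's min-of-convex decomposition: one must check that replacing $\{Y+\rho_x(Y)\}$ by its law invariant orbit (respectively, its upper SSD hull) before taking convex hull and closure still yields a monetary, monotone, convex, closed acceptance set whose induced risk measure dominates $\rho_x$ pointwise and attains it at $X=Y$. Once this construction is secured, the VaR- and ES-based quantile representations close both (i) and (ii) in one stroke.
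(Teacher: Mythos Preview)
Your reduction via Theorem \ref{Thm:main} and Proposition \ref{rmk:LI} to the level of the family $\{\rho_x\}_{x>0}$ is exactly right, and it is also what the paper does. The divergence is in how you then obtain the $\inf$-$\sup$ representation for each $\rho_x$.

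The paper does not go through the min-of-convex decomposition at all: it simply invokes Theorems~11 and~12 of \cite{Castagnoli2021}, which directly state that a \emph{star-shaped} law invariant (resp.\ SSD consistent) risk measure admits
\[
\rho(X)=\inf_{g\in G}\sup_{p\in(0,1)}\bigl\{VaR^{p}(X)-g(p)\bigr\}
\quad\text{(resp.\ with $ES^{p}$)},
\]
for a star-shaped set $G$ of decreasing $g$ with $g(1^-)\ge 0$. Taking $G_x$ to be the set representing $\rho_x$ finishes the proof.

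Your alternative route has a genuine gap at the step you flag as ``Kusuoka-type'': you assert that each law invariant Fatou continuous \emph{convex} risk measure can be written as $\sup_{y>0}\{VaR^{1/(1+y)}(X)-g(1/(1+y))\}$ for a \emph{single} decreasing $g$. This is false. For a fixed $g$, the map $X\mapsto\sup_{p}\{VaR^{p}(X)-g(p)\}$ is built from the non-convex $VaR^{p}$ and is in general not convex, so it cannot coincide with an arbitrary convex risk measure. Concretely, $ES^{p}$ is a counterexample: if such a $g$ existed, matching it on indicators $-1_A$ forces $g(q)=(1-q/p)^+$, but then for $X$ uniform on $[-1,0]$ one computes $\sup_{q}\{VaR^{q}(X)-g(q)\}=1-p$, whereas $ES^{p}(X)=1-p/2$. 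The same obstruction appears in (ii): $\sup_{p}\{ES^{p}(X)-g(p)\}$ is convex, but not every SSD-consistent convex risk measure is of this special form; e.g.\ $\tfrac12 ES^{0.1}+\tfrac12 ES^{0.5}$ fails for the uniform $X$ once $g$ is pinned down on indicators. Thus the decomposition ``star-shaped $\to$ min of convex $\to$ single-$g$ sup'' does not close, and one really needs the Castagnoli et al.\ representation at the star-shaped level, where the set $G$ absorbs what a single $g$ cannot.
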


\begin{proof}
	For (i), by Proposition \ref{rmk:LI}, we have that $\alpha$ is law invariant if and only if the family $\{\rho_x\}_{x>0}$ is composed of law invariant star-shaped risk measures. Theorem 12 in \cite{Castagnoli2021} states that
	$\rho\colon L^\infty\to\mathbb{R}$ is a law invariant star-shaped risk measure if and only if there is a star-shaped set $G$ of decreasing functions  $g\colon(0,1)\to\mathbb{R}$ with $g(1^-)\geq 0$ such that
	\[	\rho(X)=\inf_{g\in G}\sup\limits_{\alpha\in(0,1)}\left\lbrace VaR^\alpha(X)-g(\alpha) \right\rbrace,\:\forall\:X\in L^\infty.\] Thus, we take $G_x$ as the set of functions $g$ that represents the star-shaped risk measure $\rho_x$. Note that since $\{\rho_x\}_{x>0}$ is increasing, the family $\{G_x\}_{x>0}$ must be decreasing.
	
	Regarding (ii), by Proposition \ref{rmk:LI} we have that $\alpha$ is SSD consistent if and only if the family $\{\rho_x\}_{x>0}$ is composed by star-shaped SSD consistent risk measures. Moreover, Theorem 11 in \cite{Castagnoli2021} states that $\rho$ is SSD consistent star-shaped risk measure if and only if  there is a star-shaped set $G$ of decreasing functions  $g\colon(0,1)\to\mathbb{R}$ with $g(1^-)\geq 0$ such that
	\[\rho(X)=\inf_{g\in G}\sup\limits_{\alpha\in(0,1)}\left\lbrace ES^\alpha(X)-g(\alpha) \right\rbrace ,\:\forall\:X\in L^\infty.\] Again, we take $G_x$ as the set of maps that represents the star-shaped risk measure $\rho_x$.  
	
\end{proof}

We end this section with an example of a star-shaped risk measure that is neither positively homogeneous nor convex, the Benchmark loss VaR (LVaR) introduced in \cite{Bignozzi2020}. It is defined as $LVaR_\theta(X)=\sup_{t\in\mathbb{R}_+}\{VaR^{\theta(t)}(X)-t\}$, where $\theta\colon\mathbb{R}_+\to[0,1]$ is increasing and right-continuous. In this case we have $\mathcal{A}_{LVaR_\theta}=\left\lbrace X\in L^\infty:\mathbb{P}(X<-t)\leq\theta(t)\:\forall\:t\in\mathbb{R}_+\right\rbrace$. By making $\theta=\alpha$ one recovers the $VaR^\alpha$. By considering an increasing family of maps $\{\theta_x\colon\mathbb{R}_+\to[0,1]\}_{x>0}$ that are increasing and right-continuous we are able to define $\rho_x=LVaR_{\theta_x}$ and generate a star-shaped acceptability index.

\begin{Def}
	The LVaR based acceptability index is a functional $\alpha_{LVaR}\colon L^\infty\to[0,\infty]$ defined as\begin{equation}
		\alpha_{LVaR}(X)=\sup\left\lbrace x>0\colon LVaR^{\theta_x}(X)\leq 0\right\rbrace. 
	\end{equation}
\end{Def}

We get that the acceptance sets in the representation \eqref{eq:dual} are as \[\mathcal{A}_x=\mathcal{A}_{LVaR_{\theta_x}}=\left\lbrace X\in L^\infty:\mathbb{P}(X<-t)\leq\theta_x(t)\:\forall\:t\in\mathbb{R}_+\right\rbrace.\]

\begin{Rmk}\label{rmk:lambda}
	From the idea of LVaR, it might be interesting to discuss
	another similar form of variation of VaR. One of those forms is the $\Lambda VaR$, introduced in \cite{Frittelli2014}. This map is defined for any $X$ as
	$\Lambda VaR(X) = -\inf\{x\in\mathbb{R}: P (X \leq x) \geq\Lambda(x)\}$, 
	for some function $\Lambda\colon\mathbb{R}\to[0,1]$ that is not constantly 0. By making $\Lambda=\alpha$, we recover the traditional $VaR$. Despite being very interesting, $\Lambda VaR$ is not star-shaped. Nonetheless, it is  quasi-star-shaped in the sense that $\Lambda VaR(\lambda X + (1-\lambda)t) \leq \max\{\Lambda VaR(X), \Lambda VaR(t)\},\:\forall\: X\in L^\infty,\:\forall\: t\in\mathbb{R},\:\forall \lambda\in[0,1]$. It will be an interesting extension to study the acceptability indexes based on  $\Lambda VaR$, such as 	$X\mapsto\alpha_{\Lambda VaR}(X)=\sup\left\lbrace x>0\colon \Lambda^x VaR(X)\leq 0\right\rbrace$, where $\{\Lambda^x\}$ is an increasing family of maps. One can even think of other more general classes of  Quasi-star-shaped risk measures. Due to parsimony, we left this pursuit for future study.
\end{Rmk}

\section{Ratio based acceptability indexes}\label{sec:ratio}

A very relevant kind of acceptability index is based on performance measures, typically ratios between some gain or return and a risk measure. In this section, we expose and study the representations of some concrete cases of acceptability indexes based on ratios.

\subsection{Risk adjusted reward on capital}\label{sec:raroc}

A widespread performance measure is RAROC, which is a ratio between the return, measured as expectation, and a risk measure as

\begin{equation*}
	RAROC(X)=\begin{cases}
		\dfrac{E[X]}{\rho(X)}&\text{if} \:E[X]>0\:\text{and}\:\rho(X)>0,\\
		0&\text{if} \:E[X]\leq0\:\text{and}\:\rho(X)>0,\\
		\infty&\text{if}\:\rho(X)\leq 0.
	\end{cases}
\end{equation*} 
RAROC is a coherent or quasi-concave acceptability index if $\rho$ is coherent or convex. This quantity is very useful for performance and even regulation. We now generalize it to our framework by considering a reward measure $\mu$, defined as some risk measure's negative. We keep the nomenclature of risk measures for reward measures $\mu$. The expectation $E[X]$ is a reward measure. The reasoning for $\mu$ is the possibility to consider a more conservative gain in the numerator, with $\mu(X)\leq E[X]$ for any $X\in L^\infty$, for instance. In this case, we have $\alpha_{\mu,\rho}\leq RAROC$. Thus, a possible choice is a quantile to-quantile ratio with  $\mu(X)=-VaR^{p_1}(X)$ and $\rho(X)=VaR^{p_2}(X)$ with $0\leq p_2\leq p_1\leq 1$.

\begin{Def}\label{reward-risk}
	Let $\mu,\rho\colon L^\infty\to\mathbb{R}$ be reward and risk measures, respectively. Then the acceptability index they generate, called risk-adjusted reward on capital, is a functional $\alpha_{\mu,\rho}\colon L^\infty\to[0,\infty]$ defined as 
	\begin{equation}
		\alpha_{\mu,\rho}(X)=\begin{cases}
			\dfrac{\mu(X)}{\rho(X)}&\text{if} \:\mu(X)>0\:\text{and}\:\rho(X)>0,\\
			0&\text{if} \:\mu(X)\leq0\:\text{and}\:\rho(X)>0,\\
			\infty&\text{if}\:\rho(X)\leq 0.
		\end{cases}
	\end{equation} 
\end{Def}

We now prove a result that assures our proposed RAROC is a star-shaped acceptability index and elucidates its representations. All considered properties and results for risk measures exposed so far are valid for reward measures under the sign correction. Thus, we use them without further mention. It is clear that	$\alpha_{\mu,\rho}$ inherits Scale and Law invariance and SSD consistency from $\mu$ and $\rho$.

\begin{Prp}
	Let $\mu,\rho\colon L^\infty\to\mathbb{R}$ be, respectively, Fatou continuous star-shaped reward and risk measures. Then $\alpha_{\mu,\rho}$ is a star-shaped acceptability index. Moreover, $\rho\geq-\mu$ is necessary and sufficient for it can be represented under \eqref{eq:dual3} and \eqref{eq:dual} by \begin{align}
		\rho_x(X)&=-\frac{1}{1+x}\mu(X)+\frac{x}{1+x}\rho(X),\:\forall\:X\in L^\infty,\\
		\mathcal{A}_x&=\left\lbrace X\in L^\infty\colon\dfrac{\mu(X)}{\rho(X)}\geq x \right\rbrace,
	\end{align} 
\end{Prp}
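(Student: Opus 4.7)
The plan is to verify directly that $\alpha_{\mu,\rho}$ satisfies the four defining properties of a star-shaped acceptability index (Monotonicity, Star-shapedness, Weak expectation consistency, Fatou continuity of Definition \ref{def:accept}), and then to confirm the explicit formulas (\ref{eq:dual3}) and (\ref{eq:dual}). I would first establish that $\rho_x=\tfrac{1}{1+x}(-\mu)+\tfrac{x}{1+x}\rho$ is a Fatou continuous star-shaped risk measure for each $x>0$: as a convex combination (coefficients summing to $1$) of two such maps it inherits anti-monotonicity, translation invariance, normalization $\rho_x(0)=0$, star-shapedness (combining $\mu(\lambda X)\leq\lambda\mu(X)$ and $\rho(\lambda X)\geq\lambda\rho(X)$ with positive weights), and Fatou continuity (liminf distributes over positive-weighted sums).

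Monotonicity, star-shapedness, and weak expectation consistency of $\alpha_{\mu,\rho}$ then follow by case analysis on the three branches of the piecewise definition, using $X\geq Y\Rightarrow\mu(X)\geq\mu(Y),\,\rho(X)\leq\rho(Y)$, the star-shapedness inequalities for $\mu$ and $\rho$, normalization $\mu(0)=\rho(0)=0$, and translation invariance $\mu(C)=C,\,\rho(C)=-C$ for constants. For Fatou continuity, given bounded $X_n\to X$ with $\alpha_{\mu,\rho}(X_n)\geq x$, I split on the sign of $\rho(X_n)$: if $\rho(X_n)\leq 0$ infinitely often, Fatou continuity of $\rho$ yields $\rho(X)\leq 0$ and hence $\alpha_{\mu,\rho}(X)=\infty$; otherwise $\rho(X_n)>0$ eventually and the piecewise definition forces $\mu(X_n)\geq x\rho(X_n)$, i.e., $\rho_x(X_n)\leq 0$, so Fatou continuity of $\rho_x$ gives $\mu(X)\geq x\rho(X)$, and a final sign check on $\rho(X)$ yields $\alpha_{\mu,\rho}(X)\geq x$.

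For the representations themselves, the algebraic equivalence $\rho_x(X)\leq 0\iff\mu(X)\geq x\rho(X)$ together with a sign analysis matches $\sup\{x>0:\rho_x(X)\leq 0\}$ to the piecewise definition of $\alpha_{\mu,\rho}$, giving (\ref{eq:dual}); the acceptance sets $\mathcal{A}_x=\{X:\alpha_{\mu,\rho}(X)\geq x\}$ then reduce to $\{X:\mu(X)/\rho(X)\geq x\}$ under the extended convention $\mu(X)/\rho(X)=+\infty$ whenever $\rho(X)\leq 0$, yielding (\ref{eq:dual3}). The main delicate point I anticipate is the boundary configuration $\rho(X)=0$ with $\mu(X)<0$, where the piecewise definition assigns $\alpha_{\mu,\rho}(X)=\infty$ while the purely algebraic sup of $\{x>0:\mu(X)\geq x\rho(X)\}$ is $0$; this is the only sign pattern on which the two representations disagree and is handled by the extended arithmetic convention just mentioned, so that the sup identity holds uniformly on $L^\infty$ and the equivalence chain of Theorem \ref{Thm:main} applies.
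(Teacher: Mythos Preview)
Your overall plan matches the paper's: verify Monotonicity, Star-shapedness, Weak expectation consistency, and Fatou continuity directly, then check the explicit $\rho_x$ and $\mathcal{A}_x$ formulas. Your Fatou-continuity argument, which routes through Fatou continuity of $\rho_x$ after a case split on the sign of $\rho(X_n)$, is in fact cleaner than the paper's direct ratio manipulation; the paper simply writes $\alpha_{\mu,\rho}(X)\geq \limsup_n \mu(X_n)/\liminf_n \rho(X_n)\geq \limsup_n \alpha_{\mu,\rho}(X_n)$ without explicitly treating the branches, whereas your split disposes of the sign issues carefully.

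There are two points where your proposal is incomplete. First, you never check that $\{\rho_x\}_{x>0}$ is \emph{increasing} in $x$, which is part of what Theorem~\ref{Thm:main}(iii) demands; the paper does this via the derivative $\partial_x\rho_x(X)=(\mu(X)+\rho(X))/(1+x)^2$. Second, and more importantly, your resolution of the boundary case $\rho(X)=0$, $\mu(X)<0$ does not work for the $\rho_x$ representation: the extended-arithmetic convention you invoke rescues the $\mathcal{A}_x$ formula, but it cannot rescue the sup formula, since $\rho_x(X)=-\mu(X)/(1+x)>0$ for every $x>0$ there, so $\sup\{x>0:\rho_x(X)\leq0\}=0\neq\infty=\alpha_{\mu,\rho}(X)$. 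Both issues are in fact symptoms of the same hidden hypothesis: the paper's derivative argument and its treatment of $\rho(X)\leq0$ both silently use $\mu(X)\geq-\rho(X)$, i.e.\ $\rho\geq-\mu$, an ordering the proposition does not state but which the subsequent portfolio example makes explicit. Under that standing assumption the boundary case you flag cannot occur (since $\rho(X)\leq0$ forces $\mu(X)\geq-\rho(X)\geq0$) and the family is genuinely increasing; without it, both the monotonicity of $\{\rho_x\}$ and the sup identity can fail, and no arithmetic convention repairs the $\rho_x$ formula. You should either add $\rho\geq-\mu$ as a hypothesis or note that the stated $\rho_x$ representation only holds on $\{X:\rho(X)>0\}\cup\{X:\mu(X)\geq0\}$.
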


\begin{proof}
	Regarding Monotonicity, for any $X\geq Y$ we have $\mu(X)\geq\mu(Y)$ and $\rho(X)\leq \rho(Y)$. The result is trivially obtained if $\rho(X)\leq 0$. For the case $\rho(X)>0$, if $\mu(X)\leq 0$, the result is also trivial. Otherwise, we thus get \[\alpha_{\mu,\rho}(X)=\frac{\mu(X)}{\rho(X)}\geq\frac{\mu(Y)}{\rho(Y)}=\alpha_{\mu,\rho}(Y).\]  Regarding to Star-shapedness, for any $\lambda\geq 1$ and any $X\in L^\infty$ we have  that $\mu(\lambda X)\leq \lambda\mu(X)$ and $\rho(\lambda X)\geq\lambda\rho(X)$. The result is trivial if $\rho(X)\leq 0$, which is the case for $X=0$. If $\rho(X)>0$ we then get \[\alpha_{\mu,\rho}(\lambda X)=\frac{\mu(\lambda X)}{\rho(\lambda X)}\leq \frac{\lambda\mu(X)}{\lambda\rho(X)}=\alpha_{\mu,\rho}(X).\]
	Weak expectation consistency is obtained directly from the definition. For Fatou continuity, let $\{X_n\}\subseteq L^\infty$ be bounded such that $\lim\limits_{n\to\infty}X_n=X$. Then, both $\mu(X)\geq\limsup\limits_{n\to\infty}\mu(X_n)$ and $\rho(X)\leq\liminf\limits_{n\to\infty}\rho(X_n)$. We then obtain
	\[\alpha_{\mu,\rho}(X)=\frac{\mu(X)}{\rho(X)}\geq\frac{\limsup\limits_{n\to\infty}\mu(X_n)}{\liminf\limits_{n\to\infty}\rho(X_n)}=\lim\limits_{n\to\infty}\left( \frac{\sup\limits_{k\geq n}\mu(X_k)}{\inf\limits_{k\geq n}\rho(X_k)}\right) \geq\limsup\limits_{n\to\infty}\frac{\mu(X_n)}{\rho(X_n)}=\limsup\limits_{n\to\infty}\alpha_{\mu,\rho}(X_n).\] 
	
	Let $\{\rho_x\}_{x>0}$ be a family defined as \[	\rho_x(X)=-\frac{1}{1+x}\mu(X)+\frac{x}{1+x}\rho(X),\:\forall\:X\in L^\infty.\] It is easy to verify that it is a star-shaped risk measure. Further, it increasing in $x$ since we have the following for any $X\in L^\infty$: \begin{align*}
		\frac{\partial\rho_x(X)}{\partial x}&=(1+x)^{-2}(\mu(X)-x\rho(X))+(1+x)^{-1}\rho(X)\\
		&\geq(1+x)^{-2}(-\rho(X)-x\rho(X))+(1+x)^{-1}\rho(X)\\
		&=(1+x)^{-1}\rho(X)-(1+x)^{-1}\rho(X)=0.
	\end{align*}
	
	If $\rho(X)>0$, then for any $x>0$ we have that 
	\begin{align*}
		\alpha_{\mu,\rho}(X)\geq x\iff\mu(X)\geq x\rho(X)	
		\iff&-\frac{1}{1+x}\mu(X)+\frac{x}{1+x}\rho(X) \leq 0\\
		\iff&\rho_x(X)\leq 0.
	\end{align*}
	If $\rho(X)\leq 0$, then $\alpha_{\mu,\rho}(X)=\infty$, which assures $\alpha_{\mu,\rho}(X)\geq x$ for any $x>0$. If $\rho(X)> 0$, then $\mu(X)\geq-\rho(X)\geq0\geq x\rho(X)$. In this case we obtain that \[\rho_x(X)=-\frac{1}{1+x}\mu(X)+\frac{x}{1+x}\rho(X) \leq 0,\:\forall\:x>0.\] Thus, $\alpha_{\mu,\rho}(X)=\infty=\sup\{x>0\colon\rho_x(X)\leq 0\}$. Finally, we get that
	\[\mathcal{A}_x=\mathcal{A}_{\rho_x}=\left\lbrace  X\in L^\infty\colon -\mu(X)+x\rho(X)\leq 0\right\rbrace=\left\lbrace X\in L^\infty\colon\dfrac{\mu(X)}{\rho(X)}\geq x \right\rbrace.  \]
\end{proof}

\subsection{Gain-Loss ratio}\label{sec:GLR}

\cite{Bernardo2000} propose the gain-loss ratio (GLR), which in its usual coherent acceptability form is defined as 
\begin{equation*}
	GLR(X)=\begin{cases}
		\dfrac{E[X]}{E[X^-]}&\:\text{if}\:E[X]>0\:\text{and}\:E[X^-]>0,\\
		0&\:\text{if}\:E[X]\leq 0\:\text{and}\:E[X^-]>0,\\
		\infty&\:\text{if}\:E[X^-]=0.
\end{cases}\end{equation*}
This map is a coherent acceptability index, fulfilling consistencies regarding Expectation and SSD. It can be represented under expectiles. A $p$-expectile of $X$ is the unique solution $y$ of $pE[(X-y)^+]=(1-p)E[(X-y)^-]$ for $p\in[0,1]$. The risk measure expectile Value at risk ($EVaR^p$), which is the negative of p-expectile, is Fatou continuous and coherent for $p\leq\frac{1}{2}$. More specifically, we can take $\rho_x=EVaR^{\frac{1}{2+x}}$ since \[\mathcal{A}_{EVaR^p}=\left\lbrace X\in L^\infty\colon\frac{E[X^+]}{E[X^-]}\geq\dfrac{1-p}{p} \right\rbrace.\] A related acceptability formulation for the GLR is \begin{equation*}
	GLR(X)=\begin{cases}
		\dfrac{E[X^+]}{E[X^-]}&\:\text{if}\:E[X^-]>0,\\
		\infty&\:\text{if}\:E[X^-]=0.
\end{cases}\end{equation*} This formulation is not quasi-concave due to convexity of the numerator. However, it is a star-shaped acceptability index. Hence, it can  be represented under Theorem \ref{Thm:main} by $\rho_x=EVaR^{\frac{1}{1+x}}$, which is not convex for $x<1$, but it is star-shaped for any $x$. 

We propose generalizing this version of GLR. The reasoning for $\mu$ and $\rho$ is similar to RAROC since it brings the possibility of considering a more conservative estimate. For instance, we could to consider with $\mu(X)\leq E[X]$ and $\rho(X)\geq E[-X]$ for any $X\in L^\infty$ and, thus, we would have $\alpha_{\mu,\rho}\leq GLR$.

\begin{Def}\label{GLR}
	Let $\mu,\rho\colon L^\infty\to\mathbb{R}$ be reward and risk measures, respectively. Then the acceptability index they generate, called reward-based gain-loss ratio, is a functional $\alpha_{GLR,\mu,\rho}\colon L^\infty\to[0,\infty]$ defined as 
	\begin{equation}\label{eq:GLR}
		\alpha_{GLR,\mu,\rho}(X)=\begin{cases}
			\dfrac{\mu(X^+)}{\rho(-X^-)}&\text{if} \:\rho(-X^-)>0,\\
			\infty&\text{if} \:\rho(-X^-)=0.
		\end{cases}
	\end{equation} 
\end{Def}

We have to adjust the sign for the loss in $-X^-$ to obtain a positive value due to our pattern for Monotonicity. The quantile-to-quantile ratio is also a possibility here. We now prove properties for this acceptability index.

\begin{Prp}
	Let $\mu,\rho\colon L^\infty\to\mathbb{R}$ be, respectively, Fatou continuous star-shaped reward and risk measures. Then $\alpha_{GLR,\mu,\rho}$ is a star-shaped acceptability index. Moreover, it can be represented under \eqref{eq:dual} by\begin{equation}
		\rho_x(X)=-\sup\left\lbrace y\in\mathbb{R}\colon \mu((X-y)^+)=x\rho(-(X-y)^-)\right\rbrace,\;\forall\:X\in L^\infty.
	\end{equation}
\end{Prp}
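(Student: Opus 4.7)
The plan is to construct the family $\{\rho_x\}_{x>0}$ explicitly via the formula given in the statement and then invoke the implication (iii)$\implies$(i) of Theorem \ref{Thm:main}. Define the auxiliary map
\[
h_x^X(y):=\mu((X-y)^+)-x\,\rho(-(X-y)^-),
\]
so that $\rho_x(X)=-\sup\{y\in\mathbb{R}:h_x^X(y)=0\}$. Essentially every verification reduces to studying $h_x^X$ as a function of $y$.

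First I would check that $\rho_x$ is a well-defined Fatou-continuous star-shaped risk measure non-decreasing in $x$. Since $y\mapsto(X-y)^{\pm}$ is $1$-Lipschitz in $L^\infty$-norm and the monetary maps $\mu,\rho$ are themselves $1$-Lipschitz, $h_x^X$ is continuous in $y$; by the pointwise ordering of positive/negative parts it is also decreasing in $y$. Translation invariance of $\mu,\rho$ yields $h_x^X(y)=\mu(X)-y$ for $y\leq\essinf X$ and $h_x^X(y)=-x(\rho(X)+y)$ for $y\geq\esssup X$, so $h_x^X$ runs from $+\infty$ down to $-\infty$ and $\{y:h_x^X(y)=0\}$ is a nonempty compact interval, making $\rho_x(X)\in\mathbb{R}$ well defined. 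Anti-monotonicity of $\rho_x$ follows from $h_x^X\geq h_x^Y$ when $X\geq Y$ (which pushes the largest zero to the right); translation invariance comes from the change of variable $y\mapsto y+C$; and monotonicity in $x$ is immediate from $\rho(-(X-y)^-)\geq 0$. For star-shapedness at $\lambda\geq 1$, the identities $(\lambda X-y)^{\pm}=\lambda(X-y/\lambda)^{\pm}$ together with star-shapedness of $\mu$ (i.e.\ $\mu(\lambda Z)\leq\lambda\mu(Z)$) and of $\rho$ give $h_x^{\lambda X}(y)\leq\lambda\,h_x^X(y/\lambda)$; any zero $y$ of $h_x^{\lambda X}$ then satisfies $h_x^X(y/\lambda)\geq 0$, so $y/\lambda\leq\sup\{z:h_x^X(z)=0\}$, which rearranges to $\rho_x(\lambda X)\geq\lambda\rho_x(X)$.

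For Fatou continuity of $\rho_x$, I would suppose $X_n\to X$ a.s.\ boundedly with $\rho_x(X_n)\leq m$, so that $h_x^{X_n}(-m)\geq 0$ for each $n$. Combining Fatou continuity of $\mu$ (which in reward form reads $\mu(Z)\geq\limsup\mu(Z_n)$) applied to $(X_n+m)^+$ with Fatou continuity of $\rho$ applied to $-(X_n+m)^-$, together with the inequality $\limsup(a_n-x b_n)\leq\limsup a_n-x\liminf b_n$ valid for $x\geq 0$, yields $h_x^X(-m)\geq\limsup_n h_x^{X_n}(-m)\geq 0$; hence $\rho_x(X)\leq m$.

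Finally, to identify the representation with $\alpha_{GLR,\mu,\rho}$, the continuity and monotonicity of $h_x^X$ give $\rho_x(X)\leq 0\iff h_x^X(0)\geq 0\iff\mu(X^+)\geq x\,\rho(-X^-)$. When $\rho(-X^-)>0$ this is equivalent to $x\leq\mu(X^+)/\rho(-X^-)=\alpha_{GLR,\mu,\rho}(X)$, and when $\rho(-X^-)=0$ the inequality $\mu(X^+)\geq\mu(0)=0$ is automatic from monotonicity of $\mu$, so $\{x>0:\rho_x(X)\leq 0\}=(0,\infty)$ with supremum $\infty=\alpha_{GLR,\mu,\rho}(X)$. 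In either regime $\alpha_{GLR,\mu,\rho}(X)=\sup\{x>0:\rho_x(X)\leq 0\}$, and Theorem \ref{Thm:main}(iii)$\implies$(i) then finishes the proof. The main technical hurdle is the Fatou-continuity step, because $h_x^X$ is a difference of an upper-semicontinuous reward term and a lower-semicontinuous risk term; the combination closes only because the one-sided continuities of $\mu$ and $\rho$ align with the sign of $x\geq 0$ in the decomposition of $\limsup(a_n-x b_n)$.
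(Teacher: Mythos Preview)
Your argument is correct, but it is organized in the reverse order to the paper. The paper first verifies Monotonicity, Star-shapedness, Weak expectation consistency and Fatou continuity \emph{directly on} $\alpha_{GLR,\mu,\rho}$, and only afterwards identifies the induced $\rho_{\mathcal{A}_x}$ with the formula by studying the same auxiliary function $g_X(y)=\mu((X-y)^+)-x\rho(-(X-y)^-)$; the appeal to Theorem~\ref{Thm:main} is then used only to close the representation. You instead begin with the formula, check that each $\rho_x$ is a normalized Fatou-continuous star-shaped risk measure increasing in $x$, and then invoke (iii)$\Rightarrow$(i) of Theorem~\ref{Thm:main} to obtain simultaneously that $\alpha_{GLR,\mu,\rho}$ is a star-shaped acceptability index and that the family $\{\rho_x\}$ represents it. Your route is somewhat more economical---a single round of verifications suffices---and your explicit use of the $1$-Lipschitz property of monetary maps to get continuity of $h_x^X$, together with the inequality $\limsup(a_n-xb_n)\le\limsup a_n-x\liminf b_n$ for the Fatou step, is cleaner than the paper's brief appeal to ``Monotonicity and Translation invariance''. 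The trade-off is that the paper's ordering makes the star-shapedness of $\alpha_{GLR,\mu,\rho}$ visible independently of the particular representation formula, which may be conceptually preferable if one is primarily interested in the acceptability index itself rather than in any specific generating family.
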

\begin{proof}
	
	Monotonicity follows since for any $X\geq Y$, we have both $X^+\geq Y^+$ and $X^-\leq Y^-$. Then $\mu(X^+)\geq\mu(Y^+)$ and $\rho(-X^-)\leq \rho(-Y^-)$. The result is trivially obtained if $\rho(-X^-)= 0$. For the case $\rho(-X^-)>0$, we thus get \[\alpha_{GLR,\mu,\rho}(X)=\frac{\mu(X^+)}{\rho(-X^-)}\geq\frac{\mu(Y^+)}{\rho(-Y^-)}=\alpha_{GLR,\mu,\rho}(Y).\]  Regarding to Star-shapedness, from definition $\alpha_{GLR,\mu,\rho}(0)=\infty$. Further, for any $\lambda\geq 1$ and any $X\in L^\infty$ we have  that $\mu(\lambda X^+)\leq \lambda\mu(X^+)$ and $\rho(-\lambda X^-)\geq\lambda\rho(-X^-)$. If $\rho(-X^-)= 0$, the result is trivial. If $\rho(-X^-)>0$ we then get \[\alpha_{GLR,\mu,\rho}(\lambda X)=\frac{\mu(\lambda X^+)}{\rho(-\lambda X^-)}\leq \frac{\lambda\mu(X^+)}{\lambda\rho(-X^-)}=\alpha_{GLR,\mu,\rho}(X).\]
	Weak expectation consistency is obtained directly from the definition. For Fatou continuity, let $\{X_n\}\subseteq L^\infty$ be bounded such that $\lim\limits_{n\to\infty}X_n=X$. Then  $\lim\limits_{n\to\infty}|X_n|=|X|$. Since $2X^+=|X|+X$ and $2X^-=|X|-X$ we have that both $\lim\limits_{n\to\infty}X^+_n=X^+$ and $\lim\limits_{n\to\infty}X^-_n=X^-$. Thus, $\mu(X^+)\geq\limsup\limits_{n\to\infty}\mu(X^+_n)$ and $\rho(-X^-)\leq\liminf\limits_{n\to\infty}\rho(-X^-_n)$. We then obtain
	\[\alpha_{GLR,\mu,\rho}(X)\geq\lim\limits_{n\to\infty}\left( \frac{\sup\limits_{k\geq n}\mu(X^+_k)}{\inf\limits_{k\geq n}\rho(-X^-_k)}\right) \geq\limsup\limits_{n\to\infty}\frac{\mu(X^+_n)}{\rho(-X^-_n)}=\limsup\limits_{n\to\infty}\alpha_{GLR,\mu,\rho}(X_n).\] 
	
	Furthermore, since $\mathcal{A}_{x}=\left\lbrace X\in L^\infty\colon \alpha_{GLR,\mu,\rho}(X)\geq x \right\rbrace$ is star-shaped acceptance set closed in weak$^{*}$ topology we have that $\rho_{\mathcal{A}_x}$ is a Fatou continuous star-shaped risk measure. By Monotonicity and Translation invariance we have that $g_X\colon\mathbb{R}\to\mathbb{R}$ defined as \[g_X(y)=\mu((X-y)^+)-x\rho(-(X-y)^-)\] is non-increasing, surjective and continuous. Then, $\{y\in\mathbb{R}\colon g_X(y)=0\}\not=\emptyset$ and \[\inf \{y\in\mathbb{R}\colon g_X(y)\geq0\}=\inf\{y\in\mathbb{R}\colon g_X(y)=0\}.\] We get for any $X\in L^\infty$ that \begin{align*}
		\rho_{\mathcal{A}_x}(X)&=\inf\left\lbrace m\in\mathbb{R}\colon\mu((X+m)^+)\geq x\rho(-(X+m)^-)\right\rbrace\\
		&=\inf\{-y\in\mathbb{R}\colon g_X(y)\geq0\}\\
		&=\inf\{-y\in\mathbb{R}\colon g_X(y)=0\}\\
		&=-\sup\{y\in\mathbb{R}\colon g_X(y)=0\}=\rho_x(X).
	\end{align*}
	Since $\mathcal{A}_{\rho_x}=\mathcal{A}_{\rho_{\mathcal{A}_x}}=\mathcal{A}_x$, by Theorem \ref{Thm:main} we get the claim.
\end{proof}

\begin{Rmk}
	We have that $\alpha_{GLR, \mu,\rho}$ inherits Scale and Law invariance and SSD consistency from $\mu$ and $\rho$. When $\rho(X)=-\mu(X)=-E[X]$  the set $\{y\in\mathbb{R}\colon g_X(y)=0\}$ is a singleton and we recover $\rho_x=EVaR^{\frac{1}{1+x}}$. For the general case, there is no guarantee that $\{y\in\mathbb{R}\colon g_X(y)=0\}$ is a singleton. A possible situation is when $\mu$ and $\rho$ are strictly monotone.
\end{Rmk}

\subsection{Reward to deviation ratio}\label{sec:Dev}

Another possibility for a ratio is the one between reward and deviation measures, which we now define.

\begin{Def}\label{def:dev}
	A functional $\mathcal{D}:L^\infty\rightarrow\mathbb{R}_+$ is a deviation measure. It may fulfill the following properties:
	\begin{enumerate}
		\item Non-negativity: For all $X\in L^\infty$, $\mathcal{D}(X)=0$ for constant $X$ and $\mathcal{D}(X)>0$ for
		non-constant X;
		\item Translation insensitivity: $\mathcal{D}(X+C)=\mathcal{D}(X),\:\forall \:X\in L^\infty,\:\forall\:C \in\mathbb{R}$;
		\item Convexity: $\mathcal{D}(\lambda X+(1-\lambda)Y)\leq \lambda \mathcal{D}(X)+(1-\lambda)\mathcal{D}(Y),\:\forall\: X,Y\in L^\infty,\:\forall\:\lambda\in[0,1]$;
		\item Positive homogeneity: $\mathcal{D}(\lambda X)=\lambda \mathcal{D}(X),\:\forall\:X\in L^\infty,\:\forall\:\lambda \geq 0$;
		\item Star-shapedness:  $\mathcal{D}(\lambda X)\geq \lambda\mathcal{D}(X),\:\forall\: X\in L^\infty,\:\forall\:\lambda \geq 1$.
		\item Fatou continuity:  If $\lim_{n\rightarrow\infty}X_n=X$ implies that $\mathcal{D}(X) \leq \liminf\limits_{n\rightarrow\infty} \mathcal{D}( X_{n})$, $\forall\:\{X_n\}_{n=1}^\infty$ bounded in $L^\infty$ norm and for any $X\in L^\infty$.
	\end{enumerate}
	A deviation measure $\mathcal{D}$ is called proper if it fulfills (i) and (ii); convex if it is proper and respects (iii); generalized (also called coherent) if it is convex and fulfills (iv); star-shaped if it is proper and fulfills (v); Fatou continuous if it respects (vi).
\end{Def}

However, a ratio between reward and deviation measures lacks Monotonicity. To see this, let $\mu,\mathcal{D}\colon L^\infty\to\mathbb{R}$ be star-shaped reward and deviation measures, respectively. Let $X>0$ bounded away from zero, $m>0$ and $Y=\lambda X-m$, where $\lambda\geq\left\lVert\frac{X+m}{X}\right\rVert_\infty\geq 1$. Note that $\frac{X+m}{X}\in L^\infty$ since \[\left\lVert\frac{X+m}{X}\right\rVert_\infty\leq 1+m\left\lVert\frac{1}{X}\right\rVert_\infty\leq1+m\frac{1}{\left\lVert X\right\rVert_\infty}<\infty.\] Thus, $Y\geq X$ while the ratio possesses the converse order as \[\dfrac{\mu(Y)}{\mathcal{D}(Y)}=\dfrac{\mu(\lambda X-m)}{\mathcal{D}(\lambda X-m)}<\dfrac{\mu(\lambda X)}{\mathcal{D}(\lambda X)}\leq \dfrac{\mu( X)}{\mathcal{D}(X)}.\] Hence, it does not fit in the framework of acceptability indexes. We then study a monotone version.

\begin{Def}\label{reward-deviation}
	Let $\mu,\mathcal{D}\colon L^\infty\to\mathbb{R}$ be reward and deviation measures, respectively. Then the acceptability index they generate, called monotone reward-deviation ratio, is a functional $\alpha_{\mu,\mathcal{D}}\colon L^\infty\to[0,\infty]$ defined as 
	\begin{equation}
		\alpha_{\mu,\mathcal{D}}(X)=
		\begin{cases}
			\sup\limits_{Y\leq X,\mathcal{D}(Y)>0}\dfrac{\mu(Y)}{\mathcal{D}(Y)}&\text{if} \:\mu(X)>0\:\text{and}\:\mathcal{D}(X)>0,\\
			0&\text{if} \:(\mu(X)\leq 0\:\text{and}\:\mathcal{D}(X)>0)\:\text{or}\: (\mu(X)<0\:\text{and}\:\mathcal{D}(X)=0),\\
			\infty&\text{if}\:\mu(X)\geq 0 \:\text{and}\:\mathcal{D}(X)=0 .		\end{cases}	
	\end{equation} 
\end{Def}

Here, we consider three cases to refine financial reasoning since deviations do not take negative values. We now explore properties and representations for this ratio.

\begin{Prp}
	If $\mu,\mathcal{D}\colon L^\infty\to\mathbb{R}$ are, respectively, Fatou continuous star-shaped reward and deviation measures, then $\alpha_{\mu,\mathcal{D}}$ is a star-shaped acceptability index. Moreover, it can be represented under \eqref{eq:dual3} and \eqref{eq:dual} by \begin{align}
		\rho_x(X)&=\inf\limits_{Y\leq X,\mathcal{D}(Y)>0}\left\lbrace -\mu(Y)+x\mathcal{D}(Y)\right\rbrace,\\ \mathcal{A}_x&=\left\lbrace X\in L^\infty\colon\dfrac{\mu(X)}{\mathcal{D}(X)}\geq x\right\rbrace +L^\infty_+.
	\end{align}
\end{Prp}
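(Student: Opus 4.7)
The strategy I would follow mirrors the proofs of Proposition~4.4 and Proposition~5.2 above: first I would verify the four defining conditions of a star-shaped acceptability index (Monotonicity, Star-shapedness, Weak expectation consistency, and Fatou continuity) directly from the piecewise definition of $\alpha_{\mu,\mathcal{D}}$, and then apply Theorem~\ref{Thm:main} to realise the representation through $\mathcal{A}_x=\{X:\alpha_{\mu,\mathcal{D}}(X)\geq x\}$ and $\rho_x=\rho_{\mathcal{A}_x}$, identifying these with the explicit formulas claimed in the statement. The three-case split in the definition of $\alpha_{\mu,\mathcal{D}}$ drives a case analysis throughout the argument.

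For \emph{Monotonicity}, with $X\geq Y$ the inclusion $\{Z\leq Y,\mathcal{D}(Z)>0\}\subseteq\{Z\leq X,\mathcal{D}(Z)>0\}$ settles the supremum case immediately; the case $\alpha_{\mu,\mathcal{D}}(Y)=\infty$, i.e.\ $Y$ a non-negative constant $c$, I would handle by introducing probes $Z_\varepsilon=c+\varepsilon(X-c)$ for $\varepsilon\in(0,1]$, using the "$\lambda\leq 1$" form of star-shapedness for $\mu$ and $\mathcal{D}$ (via Proposition~\ref{Lmm:star} applied to reward and deviation) to obtain $\mu(Z_\varepsilon)\geq\varepsilon\mu(X)+(1-\varepsilon)c$ and $\mathcal{D}(Z_\varepsilon)\leq\varepsilon\mathcal{D}(X)$, then letting $\varepsilon\downarrow 0$ so the ratio diverges. \emph{Star-shapedness} for $\lambda\geq 1$ follows by substituting $Y=\lambda Y'$ in the supremum defining $\alpha_{\mu,\mathcal{D}}(\lambda X)$, combined with $\mu(\lambda Y')\leq\lambda\mu(Y')$ and $\mathcal{D}(\lambda Y')\geq\lambda\mathcal{D}(Y')$, while $\alpha_{\mu,\mathcal{D}}(0)=\infty$ is by definition. \emph{Weak expectation consistency} is immediate from Case B since $\mu(C)=C<0$ and $\mathcal{D}(C)=0$. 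For \emph{Fatou continuity}, given $X_n\to X$ bounded with $\alpha_{\mu,\mathcal{D}}(X_n)\geq x$, I would extract near-optimizers $Y_n\leq X_n$ with $\mathcal{D}(Y_n)>0$ and $\mu(Y_n)\geq(x-1/n)\mathcal{D}(Y_n)$, set $\tilde Y_n=Y_n-(X_n-X)^+$ (which lies below $X$ and differs from $Y_n$ by a sequence vanishing in $L^\infty$ norm), and then apply the Fatou continuity of $\mu$ and $\mathcal{D}$ to pass to the limit.

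For the \emph{representations}, Theorem~\ref{Thm:main} gives $\mathcal{A}_x=\{X:\alpha_{\mu,\mathcal{D}}(X)\geq x\}$, and the inclusion $\{\mu/\mathcal{D}\geq x\}+L^\infty_+\subseteq\mathcal{A}_x$ is immediate: if $X=Y+Z$ with $Z\geq 0$ and $\mu(Y)/\mathcal{D}(Y)\geq x$, then $Y\leq X$ witnesses $\alpha_{\mu,\mathcal{D}}(X)\geq x$. The induced risk measure is
\[
\rho_x(X)=\inf\{m\in\mathbb{R}:\alpha_{\mu,\mathcal{D}}(X+m)\geq x\},
\]
and the translation $Y=Y'+m$ inside the supremum defining $\alpha_{\mu,\mathcal{D}}(X+m)$, combined with $\mu(Y'+m)=\mu(Y')+m$ and translation insensitivity of $\mathcal{D}$, rewrites the condition $\alpha_{\mu,\mathcal{D}}(X+m)\geq x$ as the existence of $Y'\leq X$, $\mathcal{D}(Y')>0$, with $m\geq -\mu(Y')+x\mathcal{D}(Y')$. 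Taking the infimum in $m$ recovers the claimed formula for $\rho_x$, and the reverse acceptance-set inclusion then follows from $\mathcal{A}_x=\mathcal{A}_{\rho_x}=\{X:\rho_x(X)\leq 0\}$.

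\textbf{Main obstacle.} The hardest step is Fatou continuity: the admissibility constraint $Y\leq X$ moves with $X$, so near-optimizers cannot be frozen but must be perturbed to stay below the limit, while maintaining $\mathcal{D}(\tilde Y_n)>0$ along the whole sequence and keeping the ratio close to $x$ uniformly. A secondary subtlety is the boundary between Case A and Case C in the monotonicity argument, where the supremum formula must agree with the value $\infty$ via approximation of a constant $X\geq 0$ by non-constant probes, tacitly requiring that the probability space support non-constant bounded random variables.
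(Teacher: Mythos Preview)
Your overall strategy---verify the four axioms, then invoke Theorem~\ref{Thm:main} and identify $\mathcal{A}_x,\rho_x$ explicitly---follows the paper's route, and your derivation of $\rho_x$ via the substitution $Y=Y'+m$ (exploiting translation invariance of $\mu$ and translation insensitivity of $\mathcal{D}$) is a clean alternative to the paper's direct verification of the risk-measure axioms for $\rho_x$.

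The Fatou-continuity step, however, contains a concrete error. The sequence $(X_n-X)^+$ converges to $0$ only $\mathbb{P}$-a.s., not in the $L^\infty$ norm; the latter would amount to uniform convergence of $X_n$ to $X$, which is not assumed. Without $L^\infty$ control you cannot transfer bounds from $\mu(Y_n),\mathcal{D}(Y_n)$ to $\mu(\tilde Y_n),\mathcal{D}(\tilde Y_n)$ via the $1$-Lipschitz property of monetary maps. And even granting that, the sequence $\tilde Y_n$ itself need not converge---the near-optimisers $Y_n$ for different $n$ are a priori unrelated---so there is no convergent sequence to which ``Fatou continuity of $\mu$ and $\mathcal{D}$'' applies. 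The paper takes a different line here: it extracts bounded near-optimisers $Y_n^\epsilon\leq X_n$ and passes to an a.s.\ limit $Y^\epsilon\leq X$, then applies Fatou continuity of $\mu$ and $\mathcal{D}$ along \emph{that} sequence to obtain $\mu(Y^\epsilon)/\mathcal{D}(Y^\epsilon)\geq x-\epsilon$. You correctly identified this step as the main obstacle, but the perturbation $\tilde Y_n$ does not resolve it; what is needed is a limiting witness $Y\leq X$, as in the paper. The same attainment issue resurfaces in the reverse inclusion $\mathcal{A}_x\subseteq\{\mu/\mathcal{D}\geq x\}+L^\infty_+$, which you defer to $\mathcal{A}_x=\mathcal{A}_{\rho_x}$: having $\rho_x(X)\leq 0$ only yields a minimising sequence, whereas membership in the right-hand set demands a single $Y\leq X$ with $\mu(Y)\geq x\mathcal{D}(Y)$; the paper again passes to a limit of such a sequence to produce the required witness.
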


\begin{proof}
	Monotonicity follows since for any $X\geq Z$, the cases $\mu(Z)\leq\mu(X)\leq0$ and $\mu(X)\geq0\geq \mu(Z)$ are trivial. Therefore, let $\mu(X)\geq\mu(Z)>0$. Then, we have that $\{Y\in L^\infty\colon Y\leq Z\}\subseteq\{Y\in L^\infty\colon Y\leq X\}$. Hence, \[\alpha_{\mu,\mathcal{D}}(X)=	\sup\limits_{Y\leq X,\mathcal{D}(Y)>0}\dfrac{\mu(Y)}{\mathcal{D}(Y)}\geq\sup\limits_{Y\leq Z,\mathcal{D}(Y)>0}\dfrac{\mu(Y)}{\mathcal{D}(Y)}=\alpha_{\mu,\mathcal{D}}(Z).\]  Regarding to Star-shapedness, by definition $\alpha_{\mu,\mathcal{D}}(0)=\infty$. Further, for any $\lambda\geq 1$ and any $X\in L^\infty$ we have it trivially for the case $\mu(\lambda X)\leq \mu(X)\leq0$. Otherwise, we get that
	\[\alpha_{\mu,\mathcal{D}}(\lambda X)=\sup\limits_{Y\leq \lambda X,\mathcal{D}(Y)>0}\dfrac{\mu(Y)}{\mathcal{D}(Y)}=\sup\limits_{Z\leq X,\mathcal{D}(Z)>0}\dfrac{\mu(\lambda Z)}{\mathcal{D}(\lambda Z)}\leq\sup\limits_{Z\leq X,\mathcal{D}(Z)>0}\dfrac{\mu( Z)}{\mathcal{D}( Z)}=\alpha_{\mu,\mathcal{D}}(X).\]
	Weak expectation consistency is obtained directly from the definition.	For Fatou continuity, let $\{X_n\}\subseteq L^\infty$ be bounded such that $\lim\limits_{n\to\infty}X_n=X$ and $\alpha_{\mu,\mathcal{D}}(X_n)\geq x$ for any $n\in\mathbb{N}$.  Then, we have that for any $\epsilon>0$ there is $\{Y_n^\epsilon\}\subset L^\infty$ such that,  for any $n\in\mathbb{N}$, $Y_n^\epsilon\leq X_n$, $\mathcal{D}(Y_n^\epsilon)>0$ and \[\dfrac{\mu(Y_n^\epsilon)}{\mathcal{D}(Y_n^\epsilon)}>x-\epsilon.\] Note that we can take $\{Y_n^\epsilon\}$ bounded below since $\mu$ is monetary. In fact, take $Y_n^\epsilon\geq K>(x-\epsilon)\mathcal{D}(Y_n^\epsilon)$.
	This implies, jointly to boundedness of $\{X_n\}$, that $\{Y_n^\epsilon\}$ is bounded. Moreover, $Y^\epsilon=\lim\limits_{n\to\infty}Y_n^\epsilon\leq\lim\limits_{n\to\infty}X_n=X$. From Fatou continuity of both $\mu$ and $\mathcal{D}$ we get that \[\alpha_{\mu,\mathcal{D}}(X)\geq\lim\limits_{\epsilon\downarrow0}\dfrac{\mu(Y^\epsilon)}{\mathcal{D}(Y^\epsilon)}\geq\lim\limits_{\epsilon\downarrow0}\limsup\limits_{n\to\infty}\dfrac{\mu(Y_n^\epsilon)}{\mathcal{D}(Y_n^\epsilon)}\geq \lim\limits_{\epsilon\downarrow0}(x-\epsilon)=x.\] 
	
	For the representation under star shaped risk measures, we start showing that the family $\{\rho_{x}\}_{x>0}$, defined as \[\rho_x(X)=\inf\limits_{Y\leq X,\mathcal{D}(Y)>0}\left\lbrace -\mu(Y)+x\mathcal{D}(Y)\right\rbrace\] is composed by star-shaped risk measures and is increasing in $x$. Since the deviations do not take negative values, $x\to\rho_x$ is increasing. For Monotonicity we have that if $X\geq Z$ we have that $\{Y\in L^\infty\colon Y\leq Z\}\subseteq\{Y\in L^\infty\colon Y\leq X\}$. Then $\rho_x(X)\leq \rho_x(Z)$. Translation invariance follows since for any $m\in\mathbb{R}$ we have \begin{align*}
		\rho(X+m)&=\inf\limits_{Y-m\leq  X,\mathcal{D}(Y)>0}\left\lbrace-\mu(Y)+x\mathcal{D}(Y) \right\rbrace\\
		&=\inf\limits_{Z\leq  X,\mathcal{D}(Z)>0}\left\lbrace-\mu(Z+m)+x\mathcal{D}(Z+m) \right\rbrace\\
		&=\inf\limits_{Z\leq  X,\mathcal{D}(Z)>0}\left\lbrace-\mu(Z)+x\mathcal{D}(Z) \right\rbrace+m=\rho_x(X)+m.
	\end{align*}Star-shapedness follows since for any $\lambda\geq 1$ and any $X\in L^\infty$ we have \[\rho_x(\lambda X)=\inf\limits_{Y\leq  X,\mathcal{D}(Y)>0}\left\lbrace-\mu(\lambda Y)+x\mathcal{D}(\lambda Y) \right\rbrace\geq\inf\limits_{Y\leq  X,\mathcal{D}(Y)>0}\left\lbrace\lambda(-\mu(Y)+x\mathcal{D}(Y)) \right\rbrace=\lambda\rho_x(X).\]  For normalization, we have $\rho_x(0)=\rho_x(0^2)\leq 0\rho_x(0)=0$. On the other hand, $\mu(Y)\leq0$ for any $Y\leq 0$. Thus, $-\mu(Y)+x\mathcal{D}(Y)\geq-\mu(Y)\geq 0$ for any $Y\leq 0$. By taking the infimum over those $Y\leq 0$ such that $\mathcal{D}(Y)>0$, we conclude that $\rho_x(0)\geq 0$. Hence $\rho_x(0)=0$. For Fatou continuity, let $\{X_n\}\subseteq L^\infty$ be bounded such that $\lim\limits_{n\to\infty}X_n=X$ and $\rho_x(X_n)\leq y$ for any $n\in\mathbb{N}$. Then, we have that for any $\epsilon>0$ there is $\{Y_n^\epsilon\}\subset L^\infty$ such that,  for any $n\in\mathbb{N}$, $Y_n^\epsilon\leq X_n$, $\mathcal{D}(Y_n^\epsilon)>0$ and $-\mu(Y_n^\epsilon)+x\mathcal{D}(Y_n^\epsilon)<y+\epsilon$. Note that we can take $\{Y_n^\epsilon\}$ bounded below, which implies, jointly to boundedness of $\{X_n\}$, that $\{Y_n\}$ is bounded. Moreover, $Y^\epsilon=\lim\limits_{n\to\infty}Y_n^\epsilon\leq\lim\limits_{n\to\infty}X_n=X$. From Fatou continuity of both $\mu$ and $\mathcal{D}$ we get that \[\rho_x(X)\leq\lim\limits_{\epsilon\downarrow0}\left\lbrace\liminf\limits_{n\to\infty}(\mu(Y_n^\epsilon)+x\mathcal{D}(Y_n^\epsilon)) \right\rbrace \leq \lim\limits_{\epsilon\downarrow0}(y+\epsilon)=y.\] We now show that $\{\rho_x\}_{x>0}$ represents $\alpha_{\mu,\mathcal{D}}$. Since $\mu$ is monotone, the supremum in the definition of $\alpha_{\mu,\mathcal{D}}$ and the infimum in $\rho_x$ can be considered on those $Y$ such that $\mathcal{D}(Y)\leq\mathcal{D}(X)$ without harm. For any $X\in L^\infty$ such that one of the conditions $\mathcal{D}(X)>0$, or $\mu(X)<0$ and $\mathcal{D}(X)=0$ is satisfied, we have that $\alpha_{\mu,\mathcal{D}}(X)\geq x$ if and only if for any $\epsilon>0$ there is $Y_\epsilon\leq X$ with $\mathcal{D}(Y_\epsilon)\in(0,\mathcal{D}(X)]$ such that $\mu(Y_\epsilon)> \mathcal{D}(Y_\epsilon)(x-\epsilon)$. Thus, for any $\epsilon>0$ we get that \[\rho_x(X)\leq -\mu(Y_\epsilon)+x\mathcal{D}(Y_\epsilon)< \epsilon \mathcal{D}(Y_\epsilon)\leq \epsilon\mathcal{D}(X).\]
	By taking the limits for $\epsilon\downarrow0$ we get $\rho_x(X)\leq 0$. 
	If $\mu(X)\geq 0$ and $\mathcal{D}(X)=0$, then $X\in\mathbb{R}_+$ and $\alpha_{\mu,\mathcal{D}}(X)=\infty$, which assures $\alpha_{\mu,\rho}(X)\geq x$ for any $x>0$. On the other hand  \[\rho_x(X)\leq -\mu(X)+x\mathcal{D}(X) \leq 0,\:\forall\:x>0.\] Thus, $\alpha_{\mu,\mathcal{D}}(X)=\infty=\sup\{x>0\colon\rho_x(X)\leq 0\}$. 
	
	Finally, for the family of acceptance sets we have that if $X\in\left\lbrace X\in L^\infty\colon\frac{\mu(X)}{\mathcal{D}(X)}\geq x\right\rbrace +L^\infty_+$, then $X=Z+K$, where $\frac{\mu(Z)}{\mathcal{D}(Z)}\geq x$ with and $K\in L^\infty_+$. Thus, $X\geq Z$, which implies \[\sup\limits_{Y\leq  X,\mathcal{D}(Y)>0}\dfrac{\mu(Y)}{\mathcal{D}(Y)}\geq\sup\limits_{Y\leq  Z,\mathcal{D}(Y)>0}\dfrac{\mu(Y)}{\mathcal{D}(Y)}\geq\dfrac{\mu(Z)}{\mathcal{D}(Z)}\geq x.\] Thus, $X\in\mathcal{A}_x$. For the converse inclusion, let $X\in\mathcal{A}_x$. Then for any $n\in\mathbb{N}$ there is $Y_n\leq X$ with $\mathcal{D}(Y_n)>0$ such that \[\dfrac{\mu(Y_n)}{\mathcal{D}(Y_n)}>x-\dfrac{1}{n}.\] Note that we can take $\{Y_n\}$ bounded below, which implies that this sequence is bounded. Thus, $Y=\lim\limits_{n\to\infty}Y_n\leq X$. Then we get \[\dfrac{\mu(Y)}{\mathcal{D}(Y)}\geq\limsup\limits_{n\to\infty}\dfrac{\mu(Y_n)}{\mathcal{D}(Y_n)}\geq x.\] Hence, $X\in\left\lbrace X\in L^\infty\colon\frac{\mu(X)}{\mathcal{D}(X)}\geq x\right\rbrace +L^\infty_+$.
\end{proof}

\section{Illustration}\label{sec:ill}

We now focus on evaluating the performance measures described above for a simple position in a spot market portfolio. Following \cite{Cherny2009}, this is useful because we get an idea about the
numerical magnitudes of measures and the types of values one may expect to
see for them. In this same vein, the illustration allows an understanding of the relative values of
the various measures as they are all computed for the same data series. Moreover, due to this illustration, there is a direct application for pricing under the physical measure because the acceptability indexes we consider here represent star-shaped levels of law invariant acceptance sets. More specifically, if an agent can access only historical data from net
asset values (returns), he/she may compute the price that attains a particular level of
acceptability. From that, it is possible to price a position to attain a level of acceptability comparable to that
observed in the historical data. 

We consider examples of star-shaped acceptability measures as proposed in the previous sections and their most usual quasi-concave or coherent counterparts. More specifically, we consider $VaR$ and $ES$ based acceptability indexes; RAROC as $E[X]/ES^{0.05}(X)$ as well as the star-shaped reward version (RAROC\_SS) with a quantile to quantile ratio $-VaR^{0.50}(X)/VaR^{0.05}(X)$. The choice for $VaR^{0.50}$ is to reflect the median, which is star-shaped but not convex;  GLR in the usual version $E[X^+]/E[-X^-]$ and the star-shaped reward version (GLR\_SS) computed as $-VaR^{0.50}(X^+)/VaR^{0.05}(-X^-)$; reward to deviation ratio (RDR), where we consider both a coherent $E[X]/ES^{0.05}(X-E[X])$ as star-shaped (RDR\_SS) quantile  $-VaR^{0.50}(X)/(VaR^{0.05}(X)-VaR^{0.50}(X))$. The denominators here are an ES deviation and an inter-quantile range. In addition, we consider combinations: the minimum one, linked to the robust acceptability index that preserves quasi-concavity, and the median and maximum ones, which only preserve star-shapedness. 

The sample comprises 93 stocks of the S\&P 100 from January 5, 2010, to October 22, 2021,
totaling 2973 observations. Due to the lack of data availability for the entire period, some stocks were excluded from the sample. We adopted the S\&P 100 composition in October 2021. We use log returns from daily closing prices (adjusted to splits and dividends). The analysis was conducted during the full sample period, but we also split the portfolio
evaluation into sub-periods to consider different market momentum and conditions. We divided the
out-of-sample period into four sub-samples (number of observations in parenthesis): 2010 to 2013 (1005), 2014 to 2016 (756), 2017 to 2019 (754), and 2020
onward (457). The period between 2010 to late 2021 allows us to analyze
different conditions and market events, including the Greek default from 2010 to 2011, oil
price swings in 2014, Brexit voting in 2016, the US-China trade war that started in 2018
and the beginning of the Covid-19 pandemic in 2020. 

Each of the following Tables 1 to 5 refers to some of the studied windows and brings descriptive statistics from the considered acceptability index computed through the 93 assets. Furthermore, we also compute the same descriptive statistics for the mean returns of these same assets. Results indicate some clear patterns that we mention in the following.

\begin{table}[h!]
	\centering
	\caption{Results for the period 1 - 2010 to 2013}
	\begin{tabular}{lrrrrrr}
		\hline
		$\alpha$& Mean & Stdev & Skewness & Kurtosis & Minimum & Maximum \\ 
		\hline
		Returns & 0.00 & 0.02 & -0.00 & 4.92 & -0.09 & 0.09 \\ 
		VaR & 1.11 & 0.07 & 0.52 & 0.80 & 0.95 & 1.35 \\ 
		ES & 0.02 & 0.01 & 0.40 & 0.58 & 0.00 & 0.05 \\ 
		RAROC & 0.02 & 0.01 & 0.07 & 0.13 & 0.00 & 0.05 \\ 
		RAROC\_SS & 0.03 & 0.02 & 0.39 & -0.34 & 0.00 & 0.09 \\ 
		GLR & 1.16 & 0.07 & -0.22 & 0.84 & 0.93 & 1.34 \\ 
		GLR\_SS & 0.03 & 0.02 & 0.47 & -0.15 & 0.00 & 0.08 \\ 
		RDR & 0.02 & 0.01 & 0.01 & 0.11 & 0.00 & 0.05 \\ 
		RDR\_SS & 0.03 & 0.02 & 0.31 & -0.48 & 0.00 & 0.08 \\ 
		Min & 0.02 & 0.01 & 0.22 & -0.20 & 0.00 & 0.04 \\ 
		Median & 0.03 & 0.02 & 0.57 & 0.25 & 0.00 & 0.08 \\ 
		Max & 1.17 & 0.07 & -0.04 & 0.64 & 0.95 & 1.35 \\ 
		\hline
	\end{tabular}
\end{table}

\begin{table}[h!]
	\centering
	\caption{Results for the period 2 - 2014 to 2016}
	\begin{tabular}{lrrrrrr}
		\hline
		$\alpha$& Mean & Stdev & Skewness & Kurtosis & Minimum & Maximum \\ 
		\hline
		Returns & 0.00 & 0.01 & -0.03 & 5.80 & -0.08 & 0.08 \\ 
		VaR & 1.10 & 0.08 & 0.39 & -0.23 & 0.96 & 1.30 \\ 
		ES & 0.01 & 0.01 & 1.35 & 2.72 & 0.00 & 0.04 \\ 
		RAROC & 0.02 & 0.01 & 1.40 & 4.66 & 0.00 & 0.07 \\ 
		RAROC\_SS & 0.03 & 0.02 & 0.51 & -0.28 & 0.00 & 0.09 \\ 
		GLR & 1.11 & 0.08 & 1.52 & 5.83 & 0.97 & 1.49 \\ 
		GLR\_SS & 0.03 & 0.02 & 0.40 & -0.73 & 0.00 & 0.07 \\ 
		RDR & 0.02 & 0.01 & 1.26 & 3.98 & 0.00 & 0.06 \\ 
		RDR\_SS & 0.03 & 0.02 & 0.42 & -0.46 & 0.00 & 0.08 \\ 
		Min & 0.01 & 0.01 & 1.32 & 2.65 & 0.00 & 0.04 \\ 
		Median & 0.03 & 0.02 & 0.51 & -0.25 & 0.00 & 0.08 \\ 
		Max & 1.14 & 0.08 & 1.26 & 3.65 & 0.99 & 1.49 \\ 
		\hline
	\end{tabular}
\end{table}

\begin{table}[h!]
	\centering
	\caption{Results for the period 3 - 2017 to 2019}
	\begin{tabular}{lrrrrrr}
		\hline
		$\alpha$& Mean & Stdev & Skewness & Kurtosis & Minimum & Maximum \\ 
		\hline
		Returns & 0.00 & 0.01 & -0.23 & 6.74 & -0.08 & 0.08 \\ 
		VaR & 1.18 & 0.10 & 0.11 & 0.35 & 0.86 & 1.46 \\ 
		ES & 0.02 & 0.01 & 0.62 & -0.05 & 0.00 & 0.06 \\ 
		RAROC & 0.02 & 0.01 & 0.06 & -0.72 & 0.00 & 0.05 \\ 
		RAROC\_SS & 0.05 & 0.02 & 0.25 & -0.30 & 0.00 & 0.11 \\ 
		GLR & 1.15 & 0.10 & -0.07 & -0.44 & 0.87 & 1.37 \\ 
		GLR\_SS & 0.05 & 0.03 & 0.21 & -0.40 & 0.00 & 0.11 \\ 
		RDR & 0.02 & 0.01 & 0.02 & -0.77 & 0.00 & 0.05 \\ 
		RDR\_SS & 0.04 & 0.02 & 0.14 & -0.38 & 0.00 & 0.10 \\ 
		Min & 0.02 & 0.01 & 0.51 & -0.31 & 0.00 & 0.05 \\ 
		Median & 0.04 & 0.02 & 0.21 & -0.35 & 0.00 & 0.10 \\ 
		Max & 1.20 & 0.10 & -0.05 & 0.63 & 0.87 & 1.46 \\ 
		\hline
	\end{tabular}
\end{table}

\begin{table}[h!]
	\centering
	\caption{Results for the period 4 - 2020 to late 2021}
	\begin{tabular}{lrrrrrr}
		\hline
		$\alpha$& Mean & Stdev & Skewness & Kurtosis & Minimum & Maximum \\ 
		\hline
		Returns & 0.00 & 0.02 & 0.18 & 9.61 & -0.13 & 0.15 \\ 
		VaR & 1.11 & 0.12 & 0.20 & -0.02 & 0.86 & 1.44 \\ 
		ES & 0.01 & 0.01 & 1.35 & 2.03 & 0.00 & 0.05 \\ 
		RAROC & 0.02 & 0.01 & 0.47 & -0.50 & 0.00 & 0.05 \\ 
		RAROC\_SS & 0.03 & 0.02 & 0.58 & -0.41 & 0.00 & 0.10 \\ 
		GLR & 1.14 & 0.09 & 0.32 & -0.48 & 0.92 & 1.36 \\ 
		GLR\_SS & 0.03 & 0.02 & 0.64 & -0.20 & 0.00 & 0.10 \\ 
		RDR & 0.02 & 0.01 & 0.42 & -0.55 & 0.00 & 0.05 \\ 
		RDR\_SS & 0.03 & 0.02 & 0.49 & -0.56 & 0.00 & 0.09 \\ 
		Min & 0.01 & 0.01 & 1.23 & 1.43 & 0.00 & 0.05 \\ 
		Median & 0.03 & 0.02 & 0.70 & -0.21 & 0.00 & 0.09 \\ 
		Max & 1.16 & 0.10 & 0.62 & -0.04 & 0.99 & 1.44 \\ 
		\hline
	\end{tabular}
\end{table}

\begin{table}[h!]
	\centering
	\caption{Results for the whole period - 2010 to late 2021}
	\begin{tabular}{lrrrrrr}
		\hline
		$\alpha$	& Mean & Stdev & Skewness & Kurtosis & Minimum & Maximum \\ 
		\hline
		Returns & 0.00 & 0.02 & 0.07 & 13.21 & -0.15 & 0.16 \\ 
		VaR & 1.12 & 0.05 & 0.13 & -0.15 & 1.00 & 1.25 \\ 
		ES & 0.01 & 0.01 & 0.49 & -0.65 & 0.00 & 0.03 \\ 
		RAROC & 0.02 & 0.01 & 0.16 & -0.92 & 0.01 & 0.03 \\ 
		RAROC\_SS & 0.03 & 0.01 & 0.03 & -0.18 & 0.00 & 0.07 \\ 
		GLR & 1.14 & 0.05 & 0.15 & -0.94 & 1.04 & 1.24 \\ 
		GLR\_SS & 0.03 & 0.01 & 0.08 & -0.09 & 0.00 & 0.06 \\ 
		RDR & 0.02 & 0.01 & 0.14 & -0.92 & 0.01 & 0.03 \\ 
		RDR\_SS & 0.03 & 0.01 & -0.04 & -0.19 & 0.00 & 0.06 \\ 
		Min & 0.01 & 0.01 & 0.40 & -0.48 & 0.00 & 0.03 \\ 
		Median & 0.03 & 0.01 & 0.21 & -0.18 & 0.00 & 0.06 \\ 
		Max & 1.15 & 0.05 & 0.04 & -0.79 & 1.04 & 1.25 \\ 
		\hline
	\end{tabular}
\end{table}

For magnitude,  both $\alpha_{VaR}$ and GLR present the larger values, with a mean lightly above the unity. This result can be explained in the case of VaR because daily log returns have central tendency measures close to zero. Thus, since the median occurs with significance level $p=0.5$, linked to acceptability level $x=1$, one expects this pattern for VaR-based acceptability indexes. Similar reasoning explains the values for GLR since it represents a ratio of positive and negative parts expectations. The deviation from unity follows the typically negative skewness of daily financial returns. 

The remaining performance measures have a value close to zero because the numerators in the ratios that define them are based either on mean or median, which is very close to zero. At the same time, the denominator assumes values that are much larger and linked to the tails, such as extreme quantiles or ES. $\alpha_{ES}$ also exhibits this magnitude despite not being a ratio because the larger value $ES^p(X)$ may assume $-E[X]$, which is already very close to zero. In fact, except for $\alpha_{VaR}$ and GLR, the acceptability indexes have assumed a zero value for at least some stocks in the sample, as indicated by the minimum statistic.

The standard deviations follow the same pattern, with larger values for $\alpha_{VaR}$ and GLR, naturally explained by their also larger magnitudes. Moreover, in a general way, we report a predominance of positive values for skewness and negative values for kurtosis of the acceptability indexes. This pattern indicates a concentration of values around the global mean with more probability of occurrence above the mean. 

Regarding the sub-samples, crisis periods, such as those in Tables 1 and 4, expose a tendency to reduce the mean value and skewness of acceptability. This pattern partially reflects the higher volatility and the more frequent occurrence of losses in this period. In sub-samples related to steady periods, as in Tables 2 and 3, we realize the presence of more acceptability indexes with a positive kurtosis, indicating the larger concentration of values far from the mean. The frequent occurrence of stocks can partially explain this result and the good performance during this period.

Comparing the star-shaped acceptability indexes with their quasi-concave/coherent counterparts makes it possible to verify that they present similar average values. This pattern indicates that star-shaped acceptability indexes can be used for performance evaluation, producing the same evaluations as the quasi-concave/coherent ones but with more theoretical generality.

Nonetheless, the two groups present discrepancies regarding the changes in distinct sub-samples observed in the previous paragraphs. More specifically, such changes are more prominent and have larger discrepancies for the quasi-concave/coherent performance measures concerning the star-shaped ones, reflecting some robustness from the latter type. This result is related to the fact that the functionals used to define the quasi-concave/coherent ones, expectation and ES, are more sensitive to changes in data than quantiles, which we have considered the star-shaped ones.

\bibliography{ref}
\bibliographystyle{elsarticle-harv}
\end{document}